\def\mod{\mathrm{mod}\ }
\def\gap{\mathrm{gap}}
\def\bbC{\mathbb C}
\def\bbZ{\mathbb Z}
\def\modtwo{\mathrm{mod}_2}
\newcolumntype{L}[1]{>{\raggedright\let\newline\\\arraybackslash\hspace{0pt}}m{#1}}
\newcolumntype{C}[1]{>{\centering\let\newline\\\arraybackslash\hspace{0pt}}m{#1}}
\newcolumntype{R}[1]{>{\raggedleft\let\newline\\\arraybackslash\hspace{0pt}}m{#1}}
\definecolor{myurlcolor}{rgb}{0,0,0.9}
\newcommand{\be}{\begin{equation}}
\newcommand{\ee}{\end{equation}}
\newcommand{\beq}{\begin{eqnarray}}
\newcommand{\eeq}{\end{eqnarray}}
\newcommand{\beqs}{\begin{eqnarray*}}
\newcommand{\eeqs}{\end{eqnarray*}}
\theoremstyle{plain}
\newtheorem{thm}{Theorem}
\newtheorem{lem}[thm]{Lemma}
\newtheorem{prop}[thm]{Proposition}
\newtheorem{cor}[thm]{Corollary}
\theoremstyle{definition}
\newtheorem{defn}[thm]{Definition}
\tikzstyle WL=[line width=10pt,opacity=1.0]
\tikzstyle 5WL=[line width=5pt,opacity=1.0]
\tikzstyle 1WL=[line width=1pt,opacity=1.0]
\newcommand*{\myproofname}{Proof}
\def\ot{\otimes}
\def\complex{\mathbb{C}}
\def \diag {\mathrm{diag}}
\def\sharpP{\# \mathsf P}
\def\poly{\mathrm{poly}}
\def\i{\mathrm{i}}
\begin{document}

\title{Classical simulation of quantum circuits by half Gauss sums}
\author{Kaifeng Bu\texorpdfstring{$^{* \dagger}$}{}}
\email{kfbu@fas.harvard.edu (K.Bu)}
\author{Dax Enshan Koh\orcidlink{0000-0002-8968-591X}\texorpdfstring{$^{\ddagger\mathparagraph\mathsection}$}{}}
\email{dax\_koh@ihpc.a-star.edu.sg (D.E.Koh)}

\address[$\dagger$]{School of Mathematical Sciences, Zhejiang University, Hangzhou, Zhejiang 310027, China}
\address[*]{Department of Physics, Harvard University, Cambridge, Massachusetts 02138, USA}

\address[$\ddagger$]{Department of Mathematics, Massachusetts Institute of Technology, Cambridge, Massachusetts 02139, USA}

\address[$\mathparagraph$]{
Zapata Computing, Inc., 100 Federal Street, 20th Floor, Boston, Massachusetts 02110, USA}

\address[$\mathsection$]{Institute of High Performance Computing,
Agency for Science, Technology and Research (A*STAR), 1
Fusionopolis Way, \#16-16 Connexis, Singapore 138632, Singapore}

\begin{abstract}

We give an efficient algorithm to evaluate a certain class of exponential sums, namely the periodic, quadratic, multivariate half Gauss sums.
We show that these exponential sums become $\sharpP$-hard to compute when we omit either the periodicity or quadraticity condition. 
We apply our results about these exponential sums to the classical simulation
of quantum circuits, and give an alternative proof of the Gottesman-Knill theorem.
We also explore a connection between these exponential sums and the Holant framework. In particular, we generalize the existing definition of affine signatures to arbitrary dimensions, and use our results about half Gauss sums to show that the Holant problem for the set of affine signatures is tractable.

\end{abstract}

\maketitle

\setcounter{tocdepth}{1}
\tableofcontents
\section{Introduction}

Exponential sums have been extensively studied  in number theory \cite{hua2012introduction} and have a rich history that dates back to the time of Gauss \cite{gauss1801disquisitiones}. They have found numerous applications in communication theory \cite{paterson1999application}, graph theory \cite{goldberg2010complexity}, coding theory \cite{shparlinski2002exponential,hurt1997exponential}, cryptography \cite{shparlinski2002exponential,shparlinski2002exponential_1}, algorithms \cite{shparlinski2002exponential} and many other areas of applied mathematics. 

More recently, they have also found useful applications in quantum computation. In 2005, Dawson et al.\ showed, using Feynman's sum-over-paths technique \cite{feynman2010quantum}, that the amplitudes of quantum circuits with Toffoli and Hadamard gates can be expressed in terms of exponential sums \cite{dawson2005quantum}. Such an approach has complexity-theoretic applications. For example, by noting that the exponential sum can be expressed as a $\mathsf{GapP}$-function, it can be used to show that the complexity class $\mathsf{BQP}$ is contained in $\mathsf{PP}$, a result first proved by \cite{adleman1997quantum} using different methods.

The idea of using exponential sums to express quantum amplitudes has been developed further in a number of subsequent works \cite{bacon2008analyzing,penney2017quantum, montanaro2017quantum,koh2017computing,amy2018controlled,amy2018towards, kocia2018stationary}. For example, in \cite{bacon2008analyzing}, Bacon, van Dam and Russell find an exponential-sum representation of the amplitudes of algebraic quantum circuits. They then exploit the theory of exponential sums to prove several properties of such circuits. For instance, they prove that in the limit of large qudit degree, the acceptance probabilities of such circuits converge to either zero or one.

The use of exponential sums to express quantum amplitudes elucidates a correspondence between quantum circuits and low-degree polynomials, called the \textit{circuit-polynomial correspondence} \cite{montanaro2017quantum}. This correspondence allows results about polynomials to be used to prove results about quantum circuits, and vice versa. For example, this correspondence was exploited in the forward direction by \cite{koh2017computing}, which provided an alternative proof of the Gottesman-Knill Theorem \cite{gottesman1997heisenberg} for 
quopit Clifford circuits, i.e.\ Clifford circuits in odd prime dimensions \cite{koh2017computing}, by showing that the amplitudes of such circuits can be expressed in terms of tractable exponential sums.

More generally, the circuit-polynomial correspondence also establishes a connection between exponential sums and the strong classical simulation of quantum circuits---deciding whether a class of quantum circuits is classically simulable, in many cases, can be reduced to the problem of deciding whether an exponential sum is tractable. This has important applications, for example, to the goal of quantum computational supremacy \cite{preskill2012quantum, harrow2017quantum,dalzell2018many}---the intractability of an exponential sum can be used to show that the class of circuits it corresponds to cannot be efficiently simulated.

In this paper, we consider a generalization of the exponential sums used in the above examples. In particular, we introduce the periodic, quadratic, multivariate half Gauss sum, and show that these incomplete Gauss sums can be computed efficiently using number-theoretic techniques. Moreover, we show that these exponential sums can be used to express the amplitudes of qudit Clifford circuits, thereby providing an alternative proof of the Gottesman-Knill theorem for qudit Clifford circuits. We also show that without the periodicity or quadraticity condition, these exponential  sums become intractable, under plausible complexity assumptions.

Our work improves on existing results in a number of ways.
First, while the results of \cite{montanaro2017quantum} and \cite{koh2017computing} are restricted to qubit and quopit systems, respectively, our results hold for all $d$-level systems. In doing so, we address a limitation of the approach used in \cite{koh2017computing}, where the proof of the Gottesman-Knill theorem works only for $d$-level systems, where $d$ is restricted to be an odd prime. Second, while previous works on tractable exponential sums are based on Gauss sums \cite{cai2010tractable, koh2017computing, lidl1997finite}, ours are based on half Gauss sums, which are a generalization of Gauss sums. Consequently, we find a larger class of tractable exponential sums compared to previous works. 
Third, we generalize the existing definition of affine signatures \cite{cai2010tractable} to arbitrary dimensions, and use our results about half Gauss sums to show that the Holant problem for the set of affine signatures is tractable. Fourth, we demonstrate the importance of a periodicity condition, which has not been previously explored, to the classical simulation of quantum circuits.

The rest of the paper is structured as follows. In Section \ref{sec:Results}, we summarize the main results of our work. In Section \ref{sec:FHGuas}, we define half Gauss sums and give an efficient classical algorithm to compute a subclass of these sums, namely the periodic, quadratic, multivariate half Gauss sums.
In Section \ref{sec:Cliffordcircuit}, we apply our results about half Gauss sums to Clifford circuits, and provide an alternative proof of the Gottesman-Knill Theorem.
In Section \ref{sec:hardnessresults}, we study the hardness of evaluating half Gauss sums that do not satisfy either the periodicity condition or the quadraticity condition. In Section \ref{sec:tractable}, we explore a connection between half Gauss sums and the Holant framework. We generalize the existing definition of affine signatures to arbitrary dimensions, and use our results about half Gauss sums to show that the Holant problem for the set of affine signatures is tractable.

\subsection{Our results}
\label{sec:Results}

The complexity of evaluating the exponential sum 
\begin{eqnarray}
Z(d, f)=\sum_{x_1,\ldots,x_n\in \bbZ_d}
\omega^{f(x_1,\ldots,x_n)}_d,
\end{eqnarray}
where $d, n \in \bbZ^+$ are positive integers, $\omega_d=\exp(2\pi \i/d)$ is a $d$th root of unit, 
and $f(x_1,\ldots,x_n)$ is a polynomial with integer coefficients, has been studied in previous works. In particular, it was proved that
$Z(d, f)$ can be evaluated in $\poly(n)$ time when $f$ is a  quadratic polynomial. This was first proved for the case when $d$ is a prime number \cite{lidl1997finite}, before being generalized to the case when $d$ is an arbitrary positive integer \cite{cai2010tractable}. On the other hand, when $f$ is a polynomial of degree $\geq 3$, the problem of evaluating such exponential sums was proved to be $\sharpP$-hard \cite{ehrenfeucht1990computational, cai2010tractable}. 

In this paper, we consider the following generalization of the above exponential sum:
\begin{eqnarray}
Z_{1/2}(d, f)=\sum_{x_1,\ldots,x_n\in\bbZ_d}
\xi^{f(x_1,\ldots,x_n)}_d.
\end{eqnarray}
Here, $\xi_d$ is a chosen square root of $\omega_d$ (i.e. $\xi_d^2 = \omega_d$)  satisfying $\xi^{d^2}_d=1$. 

Unlike $Z(d, f)$, the sum $Z_{1/2}(d, f) $ may not be evaluable in $\poly(n)$ time even when $f$ is a  
quadratic polynomial---the properties of the coefficients of the quadratic polynomial $f$ are crucial to determining the efficiency of evaluating $Z_{1/2}(d, f)$. Assuming plausible complexity assumptions, we prove that a necessary and sufficient condition to guarantee the efficiency of evaluating $Z_{1/2}(d, f)$ for quadratic polynomials $f$ is a periodicity condition, which states that 
\begin{eqnarray}
\label{eq:periodicityCondition}
\xi^{f(x_1,\ldots,x_n)}_d=\xi^{f(x_1 (\mod d),\ldots,x_n (\mod d))}_d,
\end{eqnarray}
for all variables $x_1,\ldots, x_n\in \bbZ$.
More precisely, we prove that for quadratic polynomials $f$ satisfying the periodicity condition, $Z_{1/2}(d, f)$ can be evaluated in $\poly(n)$ time, and that without the periodicity condition, there is no efficient algorithm to evaluate $Z_{1/2}$ unless the widely-believed assumption that $\mathsf{FP} \neq \sharpP$ is false. 
This is summarized by our main theorem: 

\begin{thm} \label{thm:mainTheorem} (Restatement of Theorems \ref{thm:mainThmOne}, \ref{thm:HardnessDegree3Periodic} and \ref{thm:HardnessDegree2Aperiodic})
Let $f\in \bbZ[x_1, \ldots, x_n]$ be a quadratic polynomial
over $n$ variables $x_1, \ldots, x_n$  satisfying the periodicity condition. Then
$Z_{1/2}(d, f)$ can be computed in polynomial time. If either the quadraticity or periodicity condition is omitted, then $Z_{1/2}(d, f)$ is $\sharpP$-hard to compute.
\end{thm}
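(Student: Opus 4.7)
My plan splits the theorem into two directions: a polynomial-time algorithm under both hypotheses, and $\sharpP$-hardness when either hypothesis is dropped.

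For the tractable direction, the strategy is to reduce $Z_{1/2}(d,f)$ to an ordinary Gauss sum $Z(D,g)$ and then invoke the polynomial-time algorithm of Cai--Lu--Xia for quadratic $g$. Since $\xi_d$ may be viewed as $\omega_{2d}$ up to a sign, the summand $\xi_d^{f(x)}$ equals $\omega_{2d}^{f(x)}$, and the periodicity condition translates into divisibility constraints on the coefficients of $f$ modulo $2d$. Concretely, I would proceed inductively on the number of variables: if $x_n$ appears with coefficient $a$ on $x_n^2$ and with linear coefficient $b(x_1,\ldots,x_{n-1})$, I would complete the square modulo the order of $\xi_d$. The periodicity condition should guarantee the required divisibilities so that the completion is well-defined, the quadratic structure is preserved, and the sum factors into a one-variable Gauss sum in $x_n$ times an $(n-1)$-variable half Gauss sum. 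If no diagonal coefficient is invertible, one first applies a linear change of variables, as in the Cai--Lu--Xia reduction, to expose an invertible pivot.

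For hardness in the non-quadratic case, I would reduce from the known $\sharpP$-hardness of evaluating $Z(d,h)$ for cubic $h$ (Ehrenfeucht--Karpinski; Cai--Lu--Xia). Setting $g := 2h$, we have $\xi_d^{g(x)} = \omega_d^{h(x)}$, which automatically satisfies the periodicity condition, so $Z_{1/2}(d,g) = Z(d,h)$. Thus evaluating half Gauss sums for cubic polynomials is $\sharpP$-hard even under the periodicity condition.

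The hardness of quadratic half Gauss sums \emph{without} periodicity is the main obstacle. The challenge is to construct quadratic $f$ whose half Gauss sum has no structural reason to be tractable but provably encodes a $\sharpP$-hard problem. My plan is to exploit the extra freedom afforded by dropping periodicity: the coefficients of $f$ now effectively carry information modulo $2d$ rather than modulo $d$, so one can simulate higher-degree Gauss sums via half-integer shifts of the exponent. Concretely, I would try to embed an arbitrary cubic Gauss sum $Z(d,h)$ into $Z_{1/2}(d,f)$ for a suitable quadratic $f$ by introducing auxiliary variables whose cross-terms, combined with the half-root factors coming from $\xi_d$, reproduce the cubic monomials of $h$; the cubic hardness result then transfers to quadratic half Gauss sums.
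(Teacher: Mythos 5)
Your tractable direction follows essentially the same route as the paper (complete the square in a variable with odd diagonal coefficient, use the evenness of cross and linear terms guaranteed by periodicity to make the completion well-defined, factor off a univariate half Gauss sum, and reduce the remaining even-coefficient case to an ordinary quadratic Gauss sum handled by Cai--Lu--Xia); the paper additionally splits $d=2^m c$ by the Chinese remainder theorem and treats odd $d$ via $\xi_d=\omega_d^{(d+1)/2}$, details your sketch glosses over but which fit your plan. One caution: an arbitrary "linear change of variables to expose an invertible pivot" can destroy the evenness of the cross terms, i.e.\ the periodicity condition; the paper avoids this because when \emph{all} diagonal coefficients are even one simply has $Z_{1/2}(d,f)=Z(d,f/2)$ with no change of variables. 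Your degree-$3$ hardness argument, $Z_{1/2}(d,2h)=Z(d,h)$ with $h$ cubic, is correct and is in fact more direct than the paper's circuit-based presentation.

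The genuine gap is the case you yourself flag as the main obstacle: quadratic $f$ without periodicity. There you only state an intention ("I would try to embed an arbitrary cubic Gauss sum \dots by introducing auxiliary variables"), but the existence of such an embedding is precisely the nontrivial content, and no gadget is constructed. The paper closes this gap concretely at $d=2$: by sum-over-paths, $Z_{1/2}(2,f)$ for quadratic $f$ with arbitrary integer coefficients is (up to an efficiently computable factor) the amplitude $\bra 0 U_{\mathcal G}\ket 0$ of a circuit over the gate set $\{H,Z,CS\}$, where the odd cross terms come from $CS\ket{x_i,x_j}=\i^{x_ix_j}\ket{x_i,x_j}$; conversely, any $\{H,Z,CZ,CCZ\}$ circuit---whose amplitude is proportional to $\gap(g)$ for an arbitrary cubic $g$ over $\bbZ_2$, a $\sharpP$-hard quantity by Ehrenfeucht--Karpinski---can be recompiled over $\{H,Z,CS\}$ using the Barenco-type decomposition of $CCZ$ into $CS$, $CS^\dagger$ and $CX$ gates together with $CZ=(CS)^2$ and $CX_{12}=H_2\,CZ_{12}\,H_2$. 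The extra $H$ gates in these gadgets supply exactly the auxiliary summation variables you allude to, and the resulting path-sum polynomial is quadratic but aperiodic. Without exhibiting such a gadget (or an equivalent algebraic identity reducing a cubic monomial to an auxiliary-variable quadratic sum over $\xi_d$), your hardness claim for aperiodic quadratic half Gauss sums remains unproven.
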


\renewcommand{\arraystretch}{1.4}
\begin{table}
\begin{center}
\begin{tabu}{|C{2cm}|C{1.8cm}|[1.3pt] C{2.2cm}| C{2.2cm} | C{2.2cm}|}
\hline
\multicolumn{2}{|c|[1.3pt]}{$Z_{1/2^k}(2, f)$}  &  $\mathrm{deg}(f)=1$ & $\mathrm{deg}(f)=2$ & $\mathrm{deg}(f)\geq 3$  \\ \tabucline[1.3pt]{1-5}
{ periodic } &  $k\geq0$ & \cellcolor{blue!15} $\mathsf{FP}$ & \cellcolor{blue!15} $\mathsf{FP}$ & \cellcolor{red!25} $\#\mathsf{P}$-hard \\  \hline
aperiodic &  $k\geq 1$ & \cellcolor{blue!15} $\mathsf{FP}$ & \cellcolor{red!25} $\#\mathsf{P}$-hard & \cellcolor{red!25} $\#\mathsf{P}$-hard \\  \hline
\end{tabu}
\end{center}
\caption{\small Hardness of computing $Z_{1/2^k}(2,f)$, where $k\geq 0$ or $k\geq 1$, and $f$ is a polynomial function with coefficients in $\bbZ$ and domain $\bbZ_2^n$. Here, `periodic' means that $f$ satisfies the periodicity condition \eqref{eq:periodicityCondition}, and `aperiodic' means that $f$ does not necessarily satisfy it. The label $\mathsf{FP}$ means that $Z_{1/2^k}(d,f)$ can be computed in classical polynomial time, and $\#\mathsf{P}$-hard means that there is no efficient classical algorithm to compute $Z_{1/2^k}(d,f)$, unless the widely-believed conjecture $\mathsf{FP}\neq\#\mathsf{P}$ is false.
}
\label{tab:classification}
\end{table}

We consider
the case $d=2$, and study the complexities of evaluating more general exponential sums, namely those of the form:
\begin{equation}
    Z_{1/2^k}(2, f)=\sum_{x_1,...,x_n\in \bbZ_2}\omega^{f(x_1,\ldots,x_n)}_{2^{k+1}},
\end{equation} where $k\geq 0$ is an integer and $f$ is a polynomial with $n$ variables.
Our classification results are summarized in Table \ref{tab:classification}.

Next, we apply Theorem \ref{thm:mainTheorem} to the classical simulation of Clifford circuits. In particular, we show that 
the output probabilities of Clifford circuits can be expressed in terms of half Gauss sums:
\begin{thm}
\label{thm:GKtheoremResults}
(Simplified version of Theorem \ref{thm:GKtheorem})
Let $C$ be an $m$-qudit Clifford circuit. Let $a \in \bbZ_d^m$ and $b \in \bbZ_d^k$. Then the probability of obtaining the outcome $b$ when the first $k$ qudits of $C\ket a$ are measured is given by
\begin{equation}
P(b|a) := || \bra b_{1..k} C \ket a_{a..m} ||^2 = \frac{1}{d^{l}} Z_{1/2}(d, \phi),
\end{equation}
where $l\in \bbZ$ and
$\phi$ is a quadratic polynomial that satisfies the periodicity condition \eqref{eq:periodicityCondition}. Moreover, $l$ and $\phi$ can be computed efficiently.
\end{thm}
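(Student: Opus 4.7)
The plan is to expand the amplitude $\bra{b}_{1..k} C \ket{a}_{1..m}$ as a Feynman sum over computational-basis paths, using a fixed generating set for the qudit Clifford group consisting of the qudit Fourier (Hadamard) gate $F$, the phase gate $S$, and the SUM gate. Writing $C = U_T U_{T-1} \cdots U_1$ with each $U_t$ a generator acting on one or two qudits, and inserting a resolution of identity $\sum_{y} \ket{y}\!\bra{y}$ between consecutive layers, the amplitude factors into a product of single-layer matrix elements summed over path variables $\vec y \in \bbZ_d^N$, with $N = O(mT)$ polynomially bounded.

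I would then verify that each generator contributes a matrix element of the form (constant) $\times\, \xi_d^{q(x,y)}$ for some integer-coefficient quadratic $q$: the SUM gate contributes linear exponents together with Kronecker deltas (which can either be used to eliminate path variables or be expanded via a discrete Fourier representation, introducing one extra $\bbZ_d$ variable and a bilinear term); the Fourier gate contributes $\omega_d^{xy}/\sqrt d = \xi_d^{2xy}/\sqrt d$; and the phase gate contributes $\xi_d^{x(x+d)}$, the standard qudit convention chosen precisely so that the half-integer exponent is well-defined modulo $d$. Collecting the contributions layer-by-layer yields
\begin{equation}
\bra{b}_{1..k} C \ket{a}_{1..m} \;=\; \frac{1}{d^{L/2}} \sum_{\vec y\in\bbZ_d^N} \xi_d^{\,\phi_0(\vec y;a,b)},
\end{equation}
where $L$ counts the Fourier and Fourier-like layers and $\phi_0$ is a quadratic polynomial that can be assembled in polynomial time by tracking each gate's contribution.

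Pairing the amplitude with its complex conjugate doubles the number of path variables and gives
\begin{equation}
P(b|a) \;=\; \frac{1}{d^{L}} \sum_{\vec y,\vec y'\in\bbZ_d^N} \xi_d^{\,\phi_0(\vec y;a,b)\,-\,\phi_0(\vec y';a,b)}.
\end{equation}
Setting $\phi(\vec y,\vec y') := \phi_0(\vec y;a,b) - \phi_0(\vec y';a,b)$, which is again quadratic in $2N$ variables, would yield the claimed form $\frac{1}{d^{l}} Z_{1/2}(d,\phi)$ with $l = L$, and the same efficient bookkeeping would output both $\phi$ and $l$.

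The main obstacle will be verifying that $\phi$ satisfies the periodicity condition \eqref{eq:periodicityCondition}. Contributions from $F$ and SUM are manifestly periodic, since they have integer-coefficient quadratic or linear exponents. The delicate contribution comes from $S$: for odd $d$, $\xi_d^{x^2}$ is already a well-defined function of $x \bmod d$, while for even $d$ the explicit twist $x(x+d)$ is chosen so that the exponent is invariant modulo $d^2$ under $x \mapsto x+d$; I would check this term by term and also confirm that any $a,b$-dependent linear shifts that could threaten periodicity of $\phi_0$ cancel between $\phi_0(\vec y;a,b)$ and $\phi_0(\vec y';a,b)$ when forming $\phi$. Combining the resulting identity with Theorem \ref{thm:mainTheorem} then yields an efficient classical algorithm for $P(b|a)$, completing the argument and giving the promised alternative proof of the Gottesman--Knill theorem.
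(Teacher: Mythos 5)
Your overall strategy (sum-over-paths over a Clifford generating set, amplitude $\propto Z_{1/2}(d,\phi_0)$, then a squared amplitude expressed as $Z_{1/2}$ of a difference polynomial) is essentially the route the paper takes via Theorems \ref{thm:zeroAmplitudes} and \ref{thm:GKtheorem}, with only cosmetic differences (you use $F$, phase and SUM with deltas, the paper uses the diagonal set $\{Z,G,CZ\}$ plus $F$ so that path variables are exactly the wire segments). However, there is a genuine gap in the final step: the statement concerns measuring only the first $k$ of $m$ qudits, so $\bra{b}_{1..k}C\ket{a}$ is a vector on the remaining $m-k$ registers and
\begin{equation}
P(b|a)\;=\;\sum_{\beta\in\bbZ_d^{m-k}}\bigl|\bra{b,\beta}C\ket{a}\bigr|^2 ,
\end{equation}
whereas your displayed formula $P(b|a)=\frac{1}{d^{L}}\sum_{\vec y,\vec y'}\xi_d^{\phi_0(\vec y)-\phi_0(\vec y')}$ treats the quantity as a single scalar amplitude paired with its conjugate, i.e.\ it is only correct when $k=m$ (and as written, $\phi_0(\vec y;a,b)$ is not even defined for $k<m$, since the output values on the unmeasured registers are unspecified). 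The missing idea is the marginalization over $\beta$: after expanding each $|\bra{b\beta}C\ket{a}|^2$ as a double path sum, the $\beta$-dependence enters only through $\omega_d^{\beta\cdot(x_E-y_E)}$ on the terminal variables of the unmeasured registers, and one must use $\sum_{\beta\in\bbZ_d^{m-k}}\omega_d^{\beta\cdot(x_E-y_E)}=d^{m-k}\delta_{x_E,y_E}$ to glue those terminal variables into common variables $w_E$ (equivalently, keep $\beta$ as extra summation variables, noting the coupling $2\beta\cdot(x_E-y_E)$ is an even bilinear form and hence preserves periodicity). This is exactly how the paper assembles $\phi(x_I,y_I,x_F,y_F,x_J,y_J,w_E)$ and the prefactor $d^{-(n+k)}$; without it your identity for $P(b|a)$ is false for $k<m$.

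One secondary remark: the $a,b$-dependent boundary terms do \emph{not} cancel between $\phi_0(\vec y;a,b)$ and $\phi_0(\vec y';a,b)$, since they involve disjoint variable sets; no cancellation is needed, because these terms arise from $\omega_d$-phases and therefore appear with even coefficients in the $\xi_d$-exponent (as $2a\cdot(x_I-y_I)+2b\cdot(x_F-y_F)$ in the paper), which is what guarantees periodicity. With the $\beta$-marginalization step added and this point corrected, your argument matches the paper's proof.
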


Since half Gauss sums can be computed efficiently, 
Theorem \ref{thm:GKtheoremResults} implies that there is an efficient strong simulation of Clifford circuits. This gives an alternative proof (which does not make use of stabilizer techniques) of the Gottesman-Knill Theorem \cite{gottesman1997heisenberg}.

\section{Half Gauss sums}\label{sec:FHGuas}

\subsection{Univariate case}
Given two nonzero integers $a, d$ with $d>0$ and $\gcd(a, d)=1$, the Gauss sum\footnote{also referred to as the ``univariate quadratic homogeneous Gauss sum". See Appendix \ref{sec:terminology}.}
\cite{Lang1970Gsum} is defined as:
\begin{eqnarray}
\label{eq:univariateGauss}
G(a, d)=\sum_{x\in \bbZ_d}\omega_d^{ax^2},
\end{eqnarray}
where $\omega_d=\exp(2\pi \i/d)$ is a root of unity. It has been proved that the Gauss sum $G(a, d)$ can be computed in polynomial time in $\log a$ and $\log d$ \cite{Lang1970Gsum}. Several useful properties of the Gauss sum $G(a, d)$ have been provided in 
Appendix \ref{append:GS}.

In this section, we define a generalization of the Gauss sum, called the half Gauss sum\footnote{also referred to as the ``univariate quadratic homogeneous half Gauss sum''. See Appendix \ref{sec:terminology}. Also, note that our definition of ``half Gauss sum'' differs from that used in \cite{berndt1980half}.}: given two nonzero integers $a, d$ with $d>0$ and $\gcd(a, d)=1$, let
\begin{eqnarray}
\label{eq:univariateHalfGauss}
G_{1/2}(a, d)=\sum_{x\in \bbZ_d}\xi^{ax^2}_d.
\end{eqnarray}
Here, $\xi_d$ is a chosen square root of $\omega_d$ such that 
$\xi^{d^2}_d=1$. This condition is chosen so that the summation over the ring $\bbZ_d$ is well-defined, i.e. if $x \equiv y \  (\mod d)$, then $\xi_d^{ax^2} = \xi_d^{ay^2}$.  Note that such a condition on $\xi_d$ has also been used in the investigation of reflection positivity in parafermion algebra to ensure that the twisted product is well-defined \cite{jaffe2017reflection,jaffe2017planar}.

For $d=1$, we get $G_{1/2}(a, 1)=1$, which is trivial; hence, we will subsequently restrict our attention to the nontrivial case of
$d\geq 2$. Note that we have two choices for $\xi_d$ when $d$ is even, namely (i) $\xi_d=\omega_{2d}$ for all even $d$, and (ii) 
$\xi_d= -\omega_{2d}$ for all even $d$. Since the analyses in both cases are similar, we will present only the first case in this section, and refer the reader to Appendix \ref{appen:xi_ev} for the second case. In other words, $\xi_d$ may be expressed as follows:
\begin{align}
    \xi_d= \begin{cases}
    -\omega_{2d}=\omega^{(d+1)/2}_d , & \mbox{$d$ odd} \\
    \omega_{2d} , & \mbox{$d$ even}.
    \end{cases}
\end{align}

We will now present properties of the half Gauss sum, its relationship with the
Gauss sum, and the computational complexity of evaluating the half Gauss sum.

\begin{prop}\label{thm:HG1}
The half Gauss sum satisfies the following properties:

\begin{enumerate}
    \item If $d$ is odd, then
\begin{eqnarray}
G_{1/2}(a, d)=G(a(d+1)/2, d).
\end{eqnarray}
\item If d is even, then 
\begin{eqnarray}
G_{1/2}(a, d)=G_{1/2}(a(N_1+bN_2),b)G_{1/2}(aN_2, c),
\end{eqnarray}
where $d=bc$, $\gcd(b,c)=1$,  $2|b$, and $N_1$ and $N_2$ are integers satisfying $N_1c+N_2b=1$.
\end{enumerate}
\end{prop}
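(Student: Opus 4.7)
The plan is to handle the odd and even cases separately.

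For odd $d$, the result is immediate from the definition. Since $\xi_d = \omega_d^{(d+1)/2}$, one has $\xi_d^{ax^2} = \omega_d^{a(d+1)/2\cdot x^2}$, so summing over $x\in\bbZ_d$ produces $G(a(d+1)/2,d)$ by the definition~\eqref{eq:univariateGauss}.

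The substantive case is even $d$, where the plan is to apply the Chinese Remainder Theorem. Since $\gcd(b,c)=1$ and $N_1 c+N_2 b=1$, the map $(x_1,x_2)\mapsto N_1 c\, x_1+N_2 b\, x_2 \pmod d$ is a bijection $\bbZ_b\times\bbZ_c\to\bbZ_d$. I would substitute $x=N_1 c x_1+N_2 b x_2$ into the exponent $ax^2$ and expand. The cross term carries a factor $2bc=2d$ and therefore drops out, since $\xi_d=\omega_{2d}$ for even $d$. The two remaining pure-square terms simplify via $\omega_{2d}^{N_1^2 c^2}=\omega_{2b}^{N_1^2 c}$ and $\omega_{2d}^{N_2^2 b^2}=\omega_{2c}^{N_2^2 b}$, producing a product of a sum over $\bbZ_b$ and a sum over $\bbZ_c$. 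For the $\bbZ_c$ factor, $\omega_{2c}$ differs from $\xi_c=-\omega_{2c}$ only by a sign, but because $b$ is even the exponent $aN_2^2 b\, x_2^2$ is even and the sign disappears; the two sums then factor cleanly into $G_{1/2}(aN_1^2 c,b)\cdot G_{1/2}(aN_2^2 b,c)$.

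What remains, and where I expect the main technical subtlety, is to match these coefficients with $a(N_1+bN_2)$ and $aN_2$ in the claim. For the $c$-factor, $N_2 b\equiv 1\pmod c$ immediately gives $N_2^2 b\equiv N_2\pmod c$, so $G_{1/2}(aN_2^2 b,c)=G_{1/2}(aN_2,c)$ (using that a half Gauss sum of odd modulus depends on its first argument only modulo $c$, by part~(1)). For the $b$-factor the target congruence is $aN_1^2 c\equiv a(N_1+bN_2)\pmod{2b}$; a short manipulation using $N_1 c-1=-N_2 b$ shows the difference equals $-abN_2(N_1+1)$, which is divisible by $2b$ precisely when $N_1$ is odd. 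The key observation is that $N_1 c+N_2 b=1$ with $c$ odd and $b$ even forces $N_1$ to be odd, which closes the argument. The auxiliary well-definedness checks $\gcd(a(N_1+bN_2),b)=\gcd(aN_2,c)=1$ follow from $\gcd(a,d)=1$ and the invertibility of $N_1$ modulo $b$ and $N_2$ modulo $c$.
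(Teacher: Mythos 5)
Your argument is correct and takes essentially the same route as the paper: the odd case via $\xi_d=\omega_d^{(d+1)/2}$, and the even case via the Chinese Remainder Theorem with the same splitting integers $N_1,N_2$, where the oddness of $N_1$ (forced by $2\mid b$ and $c$ odd) does the key sign work and the $\xi_c=-\omega_{2c}$ sign is killed by the evenness of the exponent. The only cosmetic difference is bookkeeping: the paper absorbs the signs so as to land directly on the coefficients $a(N_1+bN_2)$ and $aN_2$, whereas you first obtain $aN_1^2c$ and $aN_2^2b$ and then reduce them modulo $2b$ and modulo $c$, which is an equivalent and equally valid finish.
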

  
\begin{proof}
\hfill
\begin{enumerate} 
    \item  If $d$ is odd,  
$\gcd((d+1)/2, d)=1$ and 
$\gcd(a,d)=1$. Thus, we have $\gcd(a(d+1)/2, d)=1$. Therefore, 
we have 
\begin{eqnarray*}
G_{1/2}(a, d)=\sum_{x\in \bbZ_d}\xi^{ax^2}_d
=\sum_{x\in \bbZ_d}\omega^{a\frac{d+1}{2} x^2}_d
=G(a(d+1)/2, d).
\end{eqnarray*}

\item If $d$ is even, then $a$ must be odd since $\gcd(a, d)=1$. Hence,
\begin{eqnarray*}
G_{1/2}(a, d)=\sum_{x\in \bbZ_d}\xi^{ax^2}_d
=\sum_{x\in \bbZ_d} \omega^{ax^2}_{2d}.
\end{eqnarray*}  

Moreover, $d$ can be decomposed as 
$d= bc$ with $\gcd(b, c)=1$. 
Since $d$ is even, it follows that one of $b$ and $c$ is divisible by 2. Without loss of generality, we assume that
 $2|b$, which implies that $c\equiv 1 \ (\mod 2)$.
Since $\gcd(b, c)=1$, 
 there exist two integers
$N_1$ and $N_2$ such that $N_1c+N_2b=1$. 
By the Chinese remainder theorem, there exists an isomorphism
$\bbZ_d\to \bbZ_b\times \bbZ_c$ : $x\mapsto (y,z)$ with 
$x\equiv y \ (\mod b)$ and $x\equiv z \ (\mod c)$. In fact, we can choose the map 
$x=N_2bz+N_1cy$, which can also be written as 
\begin{eqnarray*}
x=y+N_2b(z-y)
=z+N_1c(y-z).
\end{eqnarray*}
Thus,
\begin{eqnarray*}
\omega^{ax^2}_{2d}
=\omega^{aN_1x^2}_{2b}
\omega^{aN_2x^2}_{2c}.
\end{eqnarray*}
Moreover, 
\begin{eqnarray*}
\omega^{aN_1x^2}_{2b}
=\omega^{aN_1[y^2+2bN_2(z-y)+N^2_2b^2(y-z)^2]}_{2b}
=\omega^{aN_1y^2}_{2b},
\end{eqnarray*}
where the last equality comes from the fact that $2|b$,
and
\begin{eqnarray*}
\omega^{aN_2x^2}_{2c}
&=&\omega^{aN_2[z^2+2N_1c(y-z)+N^2_1c^2(y-z)^2]}_{2c}\\
&=&\omega^{aN_2z^2}_{2c}
\omega^{aN_2N^2_1c^2(y-z)^2}_{2c}\\
&=&\omega^{aN_2z^2}_{2c}
\omega^{aN_2N^2_1c^2(y^2+z^2)}_{2c}.
\end{eqnarray*}
Since $\omega^{c^2}_{2c}=(-1)^c=-1$ and 
$N_1$ is odd as  $N_2b+N_1c=1$, we have 
\begin{eqnarray*}
\omega^{aN_2x^2}_{2c}
=\omega^{aN_2z^2}_{2c}
(-1)^{aN_2(y^2+z^2)}
&=&(-\omega_{2c})^{aN_2z^2}
(-1)^{aN_2y^2}\\
&=&\xi_{c}^{aN_2z^2}
(-1)^{aN_2y^2}.
\end{eqnarray*}
Thus,
\begin{eqnarray*}
\omega^{ax^2}_{2d}
=\omega^{aN_1y^2}_{2b}
\xi_{c}^{aN_2z^2}
(-1)^{aN_2y^2}
&=&\omega^{a(N_1+bN_2)y^2}_{2b}
\xi_{c}^{aN_2z^2}\\
&=&\xi^{a(N_1+bN_2)y^2}_{b}
\xi_{c}^{aN_2z^2}.
\end{eqnarray*}
Since
$c(N_1+bN_2)+b(1-c)N_2=1$, it follows that 
$\gcd(N_1+bN_2, b)=1$. Thus
$\gcd(a(N_1+bN_2), b)=1$. But
 $\gcd(aN_2, c)=1$. Therefore, we have 
 \begin{eqnarray*}
 G_{1/2}(a, d)
 &=&\sum_{y\in \bbZ_b, 
 z\in \bbZ_c}
 \xi^{a(N_1+bN_2)y^2}_{b}
\xi_{c}^{aN_2z^2}\\
&=&G_{1/2}(a(N_1+bN_2), b)
G_{1/2}(aN_2, c).
 \end{eqnarray*}
 
\end{enumerate}
 \end{proof}

Now, any even number $d$ can always be decomposed into 
$d=2^mc$ with $m\geq 1$  and $c$ being odd. It is straightforward to see that 
  \begin{eqnarray*}
 G_{1/2}(a, d)
=G_{1/2}(a(N_1+2^mN_2), 2^m)
G_{1/2}(aN_2, c),
\end{eqnarray*}
 where $N_22^m+N_1c=1$.
 As $c$ is odd,  it can be rewritten as 
 a Gauss sum by Proposition \ref{thm:HG1}. And so it remains for us to evaluate the half Gauss sum for
 $d=2^m$, i.e., $G_{1/2}(a, 2^m)$. 
 
 \begin{prop}
 If $m\geq 3$, then 
 \begin{eqnarray}
 G_{1/2}(a, 2^m)
 = 2G_{1/2}(a, 2^{m-2}).
 \end{eqnarray}
 Moreover, 
 \begin{eqnarray}
 G_{1/2}(a, 2)
 =1+\i^{a},\\
 G_{1/2}(a, 2^2)
 =2\omega^a_8.
 \end{eqnarray}

  \end{prop}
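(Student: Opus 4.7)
The plan is to work directly from the definition with the explicit convention $\xi_{2^m} = \omega_{2^{m+1}}$ from the even case, so that
\[
G_{1/2}(a, 2^m) = \sum_{x \in \bbZ_{2^m}} \omega_{2^{m+1}}^{a x^2}.
\]
The two base cases are immediate: for $m = 1$ we just compute the two terms $x=0,1$ to get $1 + \i^a$, and for $m = 2$ we compute the four terms $x = 0, 1, 2, 3$, noting that $\gcd(a,4) = 1$ forces $a$ to be odd, which collapses the pair $1 + (-1)^a$ to $0$ and leaves $2\omega_8^a$.

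For the recursion with $m \geq 3$, the main idea is to peel off the top bit of $x$. I would write each $x \in \bbZ_{2^m}$ uniquely as $x = y + 2^{m-1} z$ with $y \in \bbZ_{2^{m-1}}$ and $z \in \bbZ_2$. Expanding $x^2 = y^2 + 2^m y z + 2^{2m-2} z^2$ and plugging into $\omega_{2^{m+1}}^{ax^2}$ produces three factors. The cross term gives $\omega_{2^{m+1}}^{a \cdot 2^m y z} = (-1)^{ayz} = (-1)^{yz}$ (using that $a$ is odd, which again follows from $\gcd(a, 2^m) = 1$), and here is where the hypothesis $m \geq 3$ gets used: the exponent $a \cdot 2^{2m-2} z^2$ in the last factor is a multiple of $2^{m+1}$, so that factor is just $1$.

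The remaining sum is then
\[
G_{1/2}(a, 2^m) = \sum_{y \in \bbZ_{2^{m-1}}} \omega_{2^{m+1}}^{a y^2} \sum_{z \in \bbZ_2} (-1)^{y z},
\]
and the inner sum vanishes unless $y$ is even, in which case it equals $2$. Substituting $y = 2u$ with $u \in \bbZ_{2^{m-2}}$ gives $\omega_{2^{m+1}}^{a \cdot 4 u^2} = \omega_{2^{m-1}}^{a u^2}$, and since $2^{m-2}$ is even for $m \geq 3$, our convention yields $\omega_{2^{m-1}} = \xi_{2^{m-2}}$; therefore the sum over $u$ is exactly $G_{1/2}(a, 2^{m-2})$, and the factor of $2$ out front gives the claimed recursion.

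The main obstacle, if any, is bookkeeping rather than depth: one must carefully track which root of unity (namely $\omega_{2^{m+1}}$, $\omega_{2^{m-1}}$, or $\xi_{2^{m-2}}$) appears at each step, and verify that the condition $m \geq 3$ is exactly what guarantees both (i) that $2^{2m-2}$ is a multiple of $2^{m+1}$, killing the $z^2$ term, and (ii) that $2^{m-2}$ is even, so that the final sum matches the convention $\xi_{2^{m-2}} = \omega_{2^{m-1}}$ used to define $G_{1/2}(a, 2^{m-2})$.
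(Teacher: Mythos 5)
Your proof is correct and follows essentially the same route as the paper: splitting each $x\in\bbZ_{2^m}$ as $x = y + 2^{m-1}z$, using $a$ odd and $2m-2 \geq m+1$ to reduce the cross and top-bit terms, and then substituting $y = 2u$ to land on $G_{1/2}(a,2^{m-2})$. The base-case computations and the identification of exactly where $m\geq 3$ is needed also match the paper's argument.
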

 \begin{proof}
 First,  $G_{1/2}(a, 2)$ and $ G_{1/2}(a, 2^2)$ can be obtained by direct calculation. 
 
 Second, for $m\geq 3$,
 \begin{eqnarray*}
 G_{1/2}(a, 2^{m})&=&
 \sum_{x\in [2^m]}
 \omega^{ax^2}_{2^{m+1}}\\
& =&\sum_{x\in [2^{m-1}]}
 \left[\omega^{ax^2}_{2^{m+1}} + \omega^{a(x+2^{m-1})^2}_{2^{m+1}} \right]\\
 &=&\sum_{x\in [2^{m-1}]}\omega^{ax^2}_{2^{m+1}} 
 \left[1+ \omega^{a2^mx+a2^{2m-2}}_{2^{m+1}}\right]\\
 &=&\sum_{x\in [2^{m-1}]}\omega^{ax^2}_{2^{m+1}} 
 \left[1+(-1)^{x}\right]\\
 &=& \sum_{y\in [2^{m-2}]}
 \omega^{a(2y)^2}_{2^{m+1}} [1+(-1)^{2y}]\\
&=&2 \sum_{y\in [2^{m-2}]}
 \omega^{4ay^2}_{2^{m+1}} 
 =2 \sum_{y\in [2^{m-2}]}
 \omega^{ay^2}_{2^{m-1}} \\
 &=&2G_{1/2}(a, 2^{m-2}).
 \end{eqnarray*}

  \end{proof}

Based on the above properties of the half Gauss sum $G_{1/2}(\cdot, \cdot)$ and the fact that the Gauss sum
$G(\cdot, \cdot)$ can be calculated in $\poly(\log a, \log d)$-time, we obtain the following corollary:

\begin{cor}
Given two nonzero integers $a, d$ with $d>0$ and $\gcd(a,d)=1$, the half Gauss sum can be calculated in 
$\poly(\log a, \log d )$ time.

\end{cor}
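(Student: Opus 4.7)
The proof plan is to assemble a recursive algorithm directly from the results already established in this section, and then verify that each recursive step runs in $\poly(\log a, \log d)$ time. At the top level, I would split on the parity of $d$. If $d$ is odd, then by the first part of Proposition \ref{thm:HG1} we have $G_{1/2}(a, d) = G(a(d+1)/2, d)$, and the result follows immediately from the fact that the Gauss sum $G(\cdot,\cdot)$ can be computed in time $\poly(\log a, \log d)$, as cited in the text. Note that $a(d+1)/2 \pmod{d}$ has bit length $O(\log a + \log d)$, so this reduction does not blow up the input size.

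For even $d$, the plan is to first factor $d = 2^m c$ with $c$ odd; this requires only $O(\log d)$ bit operations by repeatedly dividing by $2$, and $m \leq \log_2 d$. Next, I would compute integers $N_1, N_2$ with $N_2 \cdot 2^m + N_1 c = 1$ via the extended Euclidean algorithm, which also runs in $\poly(\log d)$ time. Applying the second part of Proposition \ref{thm:HG1} with $b = 2^m$ then yields the decomposition
\begin{equation*}
G_{1/2}(a, d) = G_{1/2}\bigl(a(N_1 + 2^m N_2),\, 2^m\bigr)\, G_{1/2}(a N_2,\, c).
\end{equation*}
The second factor has odd modulus and so is handled by the odd case above. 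For the first factor, I would apply the recursion $G_{1/2}(a', 2^m) = 2\, G_{1/2}(a', 2^{m-2})$ for $m \geq 3$, iterating until the modulus is reduced to a base case $m \in \{1, 2\}$, which is closed-form ($1 + \i^{a'}$ or $2\omega_8^{a'}$, respectively).

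To finish, I would check the running time. The recursion on $G_{1/2}(\cdot, 2^m)$ has depth $\lfloor m/2 \rfloor = O(\log d)$, each step consisting of a constant-size arithmetic update, so the total cost of this recursion is $\poly(\log d)$. The initial even-to-odd reduction performs one factoring, one extended Euclidean call, and a constant number of arithmetic operations on integers of bit length $O(\log a + \log d)$, all of which run in $\poly(\log a, \log d)$ time. The remaining Gauss-sum evaluation also runs in $\poly(\log a, \log d)$ time. Adding these contributions gives the claimed bound.

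The main potential obstacle is a bookkeeping one rather than a conceptual one: I need to verify that at each reduction the coefficient $a$ is replaced by an integer whose reduction modulo the new modulus still satisfies the coprimality hypothesis required by Proposition \ref{thm:HG1} (for example, that $\gcd(a(N_1+2^mN_2), 2^m)=1$ and $\gcd(aN_2, c)=1$, both of which are checked inside the proof of that proposition), and that the representatives used in intermediate computations have bit length polynomial in $\log a + \log d$. Both are straightforward given the explicit formulas, so no genuinely hard step remains.
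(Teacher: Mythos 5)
Your proposal is correct and matches the paper's (implicit) argument: the corollary is obtained exactly by combining Proposition \ref{thm:HG1}, the $d=2^m c$ decomposition, the recursion $G_{1/2}(a,2^m)=2G_{1/2}(a,2^{m-2})$ with its base cases, and the $\poly(\log a,\log d)$ algorithm for the ordinary Gauss sum. Your added bookkeeping on bit lengths, the extended Euclidean step, and preservation of coprimality only makes explicit what the paper leaves tacit.
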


\subsection{Multivariate case}

\label{sec:multivariate}

In this section, we consider a generalization of the Gauss sum \eqref{eq:univariateGauss} to the multivariate case:
 \begin{eqnarray}
 \label{eq:Zdf}
 Z(d, f)
 =\sum_{x_1,\ldots,x_n\in \bbZ_d}
 \omega^{f(x_1,\ldots, x_n)}_d,
 \end{eqnarray}
where each $x_i$ is summed over a finite 
ring $\bbZ_d$, and $f(x_1,\ldots, x_n)$ is a quadratic polynomial with integer coefficients. The  \textit{multivariate quadratic Gauss sum} \eqref{eq:Zdf} has been proved to be evaluable in polynomial time \cite{cai2010tractable}.

We also consider an analogous multivariate generalization of the half Gauss sum:
\begin{eqnarray}
\label{eq:Z12df}
Z_{1/2}(d, f)
=\sum_{x_1,...,x_n\in \bbZ_d }
\xi^{f(x_1,...,x_n)}_d,
\end{eqnarray}
where 
$f(x_1,..., x_n)=\sum_{i\leq j \in[n]}\alpha_{ij}x_ix_j+\sum_{i\in[n]}\beta_ix_i+\gamma_0$ is a quadratic polynomial 
with integer coefficients. However, $Z_{1/2}(d, f)$ may not be efficiently evaluable even
for quadratic polynomials. It turns out that the existence of an efficient algorithm depends on 
some periodicity condition.

We say that a polynomial $f$ satisfies the \textit{periodicity condition}\footnote{More generally, we say that a function $g: \bbZ^n\to \complex$ is \textit{periodic} with period $d$ if
\begin{eqnarray}
g(x_1,\ldots,x_n)=g(x_1 (\mod d),\ldots,x_n (\mod d))
\end{eqnarray}
for all variables $x_1,\ldots,x_n\in \bbZ$. 
} if 
\begin{eqnarray}
\xi^{f(x_1,...,x_n)}_d=\xi^{f(x_1 (\mod d),...,x_n (\mod d))}_d,
\end{eqnarray}
for all  variables $x_1,..., x_n\in \bbZ$. This periodicity condition can also be regarded as the well-definedness condition of $Z_{1/2}$
on $\bbZ_d$. If $d$ is an odd number, then 
$\xi_d=-\omega_{2d}$, i.e,  $\xi^d_d=1$, which implies that the periodicity condition can always be satisfied for odd $d$. 
However, the periodicity condition may not be satisfied in the case of even $d$. 

\begin{prop}
Let $d$ be even, and let  $f(x_1,\ldots, x_n)=\sum_{i\leq j \in[n]}\alpha_{ij}x_ix_j+\sum_{i\in[n]}\beta_ix_i+\gamma_0$, 
be a quadratic polynomial. Then, $f$ satisfies the periodicity condition if and only if
 the cross terms $\alpha_{ij}$ ($i<j$) and
linear terms  $\beta_i$ are all even.
\end{prop}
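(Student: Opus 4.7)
The plan is to translate the periodicity condition into a congruence in $\bbZ$, then expand and read off the constraints coefficient by coefficient.

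First I would recall that for even $d$ we chose $\xi_d=\omega_{2d}$, a primitive $(2d)$-th root of unity, so $\xi_d^{m}=\xi_d^{m'}$ iff $m\equiv m'\pmod{2d}$. Hence the periodicity condition is equivalent to
\begin{equation*}
f(x_1+dk_1,\ldots,x_n+dk_n)\;\equiv\;f(x_1,\ldots,x_n)\pmod{2d}
\qquad\text{for all }x_i,k_i\in\bbZ.
\end{equation*}
So I reduce everything to an integer congruence mod $2d$. I would then expand the difference $f(x+dk)-f(x)$ using the explicit form of $f$. For the diagonal quadratic contribution $\alpha_{ii}x_i^2$ one gets $2d\alpha_{ii}k_ix_i+d^2\alpha_{ii}k_i^2$, for a cross term $\alpha_{ij}x_ix_j$ with $i<j$ one gets $d\alpha_{ij}(k_ix_j+k_jx_i)+d^2\alpha_{ij}k_ik_j$, and the linear term contributes $d\beta_ik_i$.

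Next I would observe that because $d$ is even, $d^2\equiv 0\pmod{2d}$, so all the $d^2$-terms automatically drop out, as does the $2d\alpha_{ii}k_ix_i$ piece; in particular the diagonal coefficients $\alpha_{ii}$ and the constant $\gamma_0$ are entirely unconstrained. The remaining obligation is
\begin{equation*}
d\sum_{i<j}\alpha_{ij}(k_ix_j+k_jx_i)\;+\;d\sum_i\beta_ik_i\;\equiv\;0\pmod{2d},
\end{equation*}
i.e., after dividing by $d$,
\begin{equation*}
\sum_{i<j}\alpha_{ij}(k_ix_j+k_jx_i)+\sum_i\beta_ik_i\;\equiv\;0\pmod{2}
\end{equation*}
for all integer vectors $x,k$.

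Finally, to extract the individual parity conditions I would specialize: taking $k_l=1$ and all other $k_i$ and all $x_i$ equal to $0$ gives $\beta_l\equiv 0\pmod 2$; then taking $k_l=1$, $x_m=1$ (for $l\neq m$) and all other variables $0$ gives $\alpha_{lm}\equiv 0\pmod 2$ (where I write $\alpha_{lm}$ for $\alpha_{\min(l,m)\max(l,m)}$). This shows the ``only if'' direction and simultaneously, reading the chain of equivalences backwards, the ``if'' direction: when every $\beta_i$ and every off-diagonal $\alpha_{ij}$ is even, the displayed mod-$2$ identity holds identically, hence $f(x+dk)\equiv f(x)\pmod{2d}$. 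The only subtle point, which I would emphasize, is the asymmetry between the diagonal $\alpha_{ii}x_i^2$ (which survives due to the cancellation $2d\alpha_{ii}k_ix_i\equiv 0$ and $d^2\equiv 0\pmod{2d}$) and the cross terms $\alpha_{ij}x_ix_j$ for $i<j$ (whose cross contribution is only a multiple of $d$, not $2d$); this is exactly why only the off-diagonal and linear coefficients need to be even.
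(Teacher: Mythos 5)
Your proof is correct and follows essentially the same route as the paper's: both expand $f$ under a shift of a variable by $d$, use that $\xi_d=\omega_{2d}$ has order $2d$ together with $d^2\equiv 0\pmod{2d}$ for even $d$, and then specialize variables to read off the parities of the $\beta_i$ and the off-diagonal $\alpha_{ij}$. The only difference is organizational---you reduce everything to a single congruence mod $2d$ (then mod $2$) before specializing, which also makes the ``if'' direction explicit, whereas the paper specializes first and asserts the ``if'' direction as immediate.
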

\begin{proof}
It is easy to verify that the quadratic polynomial $f$ satisfies the periodicity condition if all
the cross terms $\alpha_{ij}$ ($i<j$) and 
linear terms  $\beta_i$ are even.

In the other direction, if $f$ satisfies the periodicity condition, 
then $\xi^{f(x_1,...,x_n)}_d=\xi^{f(x_1 (\mod d),...,x_n (\mod d))}_d$ for any 
$x_1,...,x_n\in \bbZ$. Thus,  for any $i$, 
\begin{eqnarray*}
\xi^{\alpha_{ii}x^2_i+\beta_i x_i}_d
=\xi^{\alpha_{ii}(x_i+d)^2+\beta_i (x_i+d)}_d ,
\end{eqnarray*}
for any $x_i\in \bbZ$
by choosing $x_j=0$ for any $j\neq i$.
Besides, $\xi_d $ satisfies the conditions $\xi^{2d}_d=1$ and $\xi^{d^2}_d=1$.
Thus, $\xi^{\beta_i d}_d=(-1)^{\beta_i}=1$, which implies that $\beta_i$ is an even number. 
Since $i$ was chosen arbitrarily, all linear terms $\beta_i$ are even.
Besides, for any fixed $i$ and  $j$ with $i<j$, we can choose $x_k=0$ for any $k\neq i, j$:
\begin{eqnarray*}
\xi^{\alpha_{ii}x^2_i+\alpha_{jj}x^2_j+\alpha_{ij}x_ix_j+\beta_i x_i+\beta_jx_j}_d
=\xi^{\alpha_{ii}(x_i+d)^2+\alpha_{jj}x^2_j+\alpha_{ij}(x_i+d)x_j+\beta_i (x_i+d)+\beta_jx_j}_d ,
\end{eqnarray*}
for any $x_i, x_j\in \bbZ$. This implies that $\alpha_{ij}$ is even. 
Since $i$ and $j$ were arbitrarily chosen, all the cross terms $\alpha_{ij}$ are even.

\end{proof}

The periodicity condition of  the polynomial $f$ plays an important in the efficient evaluation of the 
exponential sum $Z_{1/2}$. We denote the set of quadratic polynomials satisfying the periodicity condition by $\mathcal F_2^{\mathrm{p.c.}}$.
For any quadratic polynomial $f$ satisfying this periodicity condition, the exponential sum $Z_{1/2}(d, f)$ can be evaluated in polynomial 
time given the description of $f$.

\begin{thm}
\label{thm:mainThmOne}

If $f \in \mathcal F_2^{\mathrm{p.c.}}$ is a quadratic polynomial satisfying the periodicity condition, then 
$Z_{1/2}(d, f)$ can be evaluated in polynomial time.
\end{thm}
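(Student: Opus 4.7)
The plan is to treat the odd and even $d$ cases separately, in each reducing $Z_{1/2}(d,f)$ to a standard multivariate quadratic Gauss sum $Z(\cdot,\cdot)$, which is known to be evaluable in polynomial time by the algorithm of \cite{cai2010tractable}.

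For odd $d$, I would invoke the multivariate analogue of Proposition \ref{thm:HG1}(1). Since $\xi_d = \omega_d^{(d+1)/2}$ with $(d+1)/2 \in \bbZ$, the identity $\xi_d^{f(x)} = \omega_d^{\frac{d+1}{2} f(x)}$ holds pointwise on $\bbZ_d^n$, so $Z_{1/2}(d,f) = Z(d, \tfrac{d+1}{2}\, f)$. The new integrand is still a quadratic polynomial with integer coefficients, so the claim follows immediately from \cite{cai2010tractable}. Note that for odd $d$ the periodicity condition is automatic, so no extra hypothesis is needed here.

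For even $d$, where $\xi_d = \omega_{2d}$, the plan is to lift the summation from $\bbZ_d^n$ to $\bbZ_{2d}^n$ and thereby trade the half root of unity for a genuine $(2d)$th root. Writing each $y \in \bbZ_{2d}^n$ uniquely as $y = x + d\epsilon$ with $x \in \{0,\ldots,d-1\}^n$ and $\epsilon \in \{0,1\}^n$, I would first verify the congruence $f(x+d\epsilon) \equiv f(x) \pmod{2d}$ for every $\epsilon$. This uses exactly the two ingredients of the periodicity hypothesis: the cross coefficients $\alpha_{ij}$ ($i<j$) and linear coefficients $\beta_i$ being even, together with $d$ itself being even (which makes $d^2$ divisible by $2d$). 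Once this congruence is in hand,
\begin{equation}
Z(2d,f) \;=\; \sum_{\epsilon \in \{0,1\}^n}\, \sum_{x \in \{0,\ldots,d-1\}^n} \omega_{2d}^{f(x+d\epsilon)} \;=\; 2^n\, Z_{1/2}(d,f),
\end{equation}
so $Z_{1/2}(d,f) = 2^{-n}\, Z(2d,f)$ is again polynomial-time computable by \cite{cai2010tractable} (the bit length of the modulus grows by only one).

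The main---and really only---nontrivial step is the $2d$-congruence above, which amounts to expanding the quadratic form and observing that each monomial in $f(x+d\epsilon)-f(x)$ either carries an automatic factor of $2d$ (for instance the diagonal shift pieces $2\alpha_{ii} d\epsilon_i x_i$) or is killed by an even coefficient guaranteed by periodicity. Thus the conceptual content is the lifting identity $Z_{1/2}(d,f) = 2^{-n}\, Z(2d,f)$, which exhibits the periodicity condition as precisely the hypothesis that enables the descent from the half Gauss sum to a full quadratic Gauss sum of doubled modulus.
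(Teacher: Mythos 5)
Your proposal is correct, and for even $d$ it takes a genuinely different route from the paper's. The odd-$d$ case is identical to the paper's (rewrite $\xi_d=\omega_d^{(d+1)/2}$ and invoke \cite{cai2010tractable}), but for even $d$ the paper argues by an explicit variable-by-variable reduction: it isolates a variable whose diagonal coefficient is odd, completes the square (using that the cross and linear coefficients are even and that an odd $\alpha_{11}$ is invertible modulo $2^{m+1}$), peels off a univariate factor $G_{1/2}(\alpha_{11},d)$, iterates until all diagonal coefficients are even (whereupon $Z_{1/2}(d,f)=Z(d,f/2)$), and splits a general even $d=2^m c$ by the Chinese remainder theorem into the cases $2^m$ and odd $c$. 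Your alternative is the single lifting identity $Z(2d,f)=2^n Z_{1/2}(d,f)$, which is valid because for even $d$ the periodicity condition (even cross and linear coefficients, together with $2d \mid d^2$) is precisely the congruence $f(x+d\epsilon)\equiv f(x)\pmod{2d}$; this reduces the problem in one step to a full quadratic Gauss sum at modulus $2d$, computable by \cite{cai2010tractable} in time polynomial in $n$ and $\log(2d)$. Your route is shorter and treats all even $d$ uniformly, at the cost of using \cite{cai2010tractable} as a black box for composite moduli; the paper's route is longer but produces explicit product formulas in terms of the univariate half Gauss sums of Proposition~\ref{thm:HG1}, machinery it reuses elsewhere. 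One small remark: your lifting step relies on the main-text convention $\xi_d=\omega_{2d}$ for even $d$; the other admissible choice $\xi_d=-\omega_{2d}=\omega_{2d}^{d+1}$ is covered by the same argument applied to $(d+1)f$.
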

\begin{proof}
Consider the expression 
\begin{eqnarray*}
f(x_1,..., x_n)=\sum_{i\leq j \in[n]}\alpha_{ij}x_ix_j+\sum_{i\in[n]}\beta_ix_i+\gamma_0,
\end{eqnarray*}
with the cross term $\alpha_{ij}$ ($i<j$) and linear term $\beta_i$ being even.
We may assume that $\gamma_0=0$, as it only contributes an additive constant term to $Z_{1/2}(d,f)$.

\noindent\underline{Case (i)}: All diagonal terms $\alpha_{ii}$ are even. In this case,
$Z_{1/2}(d, f)=Z(d, f/2)$, which can be evaluated in polynomial time \cite{cai2010tractable}.

\noindent\underline{Case (ii)}: There exists at least one diagonal term $\alpha_{ii}$ that is odd. 

\noindent\underline{Case (iia)}:
 $d$ is odd. Then, $\xi_d=\omega^{(d+1)/2}_d$. Thus, 
$Z_{1/2}(d, f)=Z(d, \frac{d+1}{2}f)$, which can be evaluated in polynomial time \cite{cai2010tractable}.

\noindent\underline{Case (iib)}:  $d=2^m$. Then, $\xi_d=\omega_{2d}$. 
Since there exists at least one diagonal term $\alpha_{ii}$ that is odd, we assume 
that $\alpha_{11}$ is odd without loss of generality.
 Since $\alpha_{11}$ is odd, it is invertible in 
$\bbZ_{2d}$ with $2d=2^{m+1}$. We can rewrite the quadratic polynomial $f$ to separate the term involving $x_{1}$: 
\begin{eqnarray*}
f(x_1,\ldots, x_n)
=\alpha_{11}[x^2_1
+x_1f_1(\hat{x}_1, x_2, \ldots, x_n)]
+f_2(\hat{x}_1, x_2,...,x_n),
\end{eqnarray*}
where $f_1$ is a linear function over $n-1$  variables
$\set{x_2,\ldots,x_n}$ with 
\begin{eqnarray*}
f_1(\hat{x}_1, x_2,\ldots,x_n)
=\sum_{j\geq 2}\alpha^{-1}_{11}\alpha_{1j} x_j+\alpha^{-1}_{11}\beta_1,
\end{eqnarray*}
and  $f_2$ is a quadratic polynomial with even cross terms and linear terms over $n-1$  variables $\set{x_2,...,x_n}$. Here, the notation $\hat{x}_1$ means that the variable $x_1$ is absent from the polynomial.

Since the cross terms and linear terms are even,
\begin{eqnarray*}
f_1=2f'_1=2\left(\sum_{j\geq 2}\frac{\alpha^{-1}_{11}\alpha_{1j}}{2} x_j+\frac{\alpha^{-1}_{11}\beta_1}{2}\right).
\end{eqnarray*}
Thus,
\begin{eqnarray*}
f=\alpha_{11}(x_1+f'_1)^2+f',
\end{eqnarray*}
where $f'$ is a quadratic polynomial with even cross terms and linear terms over $n-1$  variables $\set{x_2,...,x_n}$.
Therefore,
\begin{eqnarray*}
Z_{1/2}(d, f)
=\sum_{x_1,...,x_n\in \bbZ_d}
\xi^{\alpha_{11}(x_1+f'_1)^2+f'}_d
&=&\sum_{x_2,..,x_n\in \bbZ_d}\xi^{f'}_d \sum_{x_1\in \bbZ_d}\xi^{\alpha_{11}(x_1+f'_1)^2}_d\\
&=&Z_{1/2}(d, f')G_{1/2}(\alpha_{11}, d),
\end{eqnarray*}
where the last equality comes from the fact that the summation over 
$x_1\in \bbZ_d$ is independent of the value of $f'_1$. 
This reduces the
evaluation  of $Z_{1/2}(d, f)$ to 
$Z_{1/2}(d, f')$
where $f'$ is a quadratic polynomial over $n -1$ variables with 
even cross terms and linear terms. We can repeat this step 
until all the diagonal terms are even, 
  which then reduces to Case (i).

\noindent\underline{Case (iic)}: $d=2^mc$, with $c $ being odd and $c\geq3$. Then, $\xi_d=\omega_{2d}$. 
Since there exists at least one diagonal term $\alpha_{ii}$ that is odd, we shall take,
without loss of generality, the first $t$ diagonal terms
$\alpha_{ii}$ ($1\leq i \leq t$) to be odd and the other diagonal terms
$\alpha_{ii}$ ($i\geq t+1$) to be even.

Now, we can rewrite $f$ as follows
\begin{eqnarray*}
f(x_1,...,x_n)
=\sum^{t}_{i=1}x^2_{i}
+f_1(x_1,..,x_n),
\end{eqnarray*}
where the coefficients of the quadratic form $f_1$ are all even. 
Hence, $f=\sum^{t}_{i=1}x^2_{i}+2f'_1$, with $f'_1=f_1/2$.

Since $\gcd(2^m, c)=1$, 
 there exist two integers
$N_1$ and $N_2$ such that $N_22^m+N_1c=1$. 
Adopting a process similar to that used in the proof of Proposition \ref{thm:HG1}, we find, using the Chinese remainder theorem, that there exists an isomorphism
$\bbZ_d\to \bbZ_{2^m}\times \bbZ_c$ :: $x_i\mapsto (y_i, z_i)$ with 
$x_i\equiv y_i \ (\mod 2^m)$ and $x_i\equiv z_i \ (\mod c)$. Thus, we have
\begin{eqnarray*}
&&Z_{1/2}(d, f)\\
&=&\sum_{x_1,...,x_n\in \bbZ_d}
\xi^{\sum^t_{i=1}x^2_i}_d\omega^{f'_1(x_1,..,x_n)}_{d}\\
&=&\sum_{y_1,...,y_n\in \bbZ_{2^m}}
\sum_{z_1,...,z_n\in \bbZ_c}
\xi^{\sum^t_{i=1}(N_1+2^mN_2)y^2_i}_{2^m}\xi^{\sum^t_{i=1}N_2z^2_i}_c
\omega^{N_1f'_1(y_1,...,y_n)}_{2^m}
\omega^{N_2f'_1(z_1,...,z_n)}_{c}\\
&=&\sum_{y_1,...,y_n\in \bbZ_{2^m}}\xi^{\sum^t_{i=1}(N_1+2^mN_2)y^2_i}_{2^m}\omega^{N_1f'_1(y_1,...,y_n)}_{2^m}
\sum_{z_1,...,z_n\in \bbZ_c}\xi^{\sum^t_{i=1}N_2z^2_i}_c\omega^{N_2f'_1(z_1,...,z_n)}_{c}\\
&=&\sum_{y_1,...,y_n\in \bbZ_{2^m}}\xi^{\sum^t_{i=1}(N_1+2^mN_2)y^2_i}_{2^m}\omega^{(N_1+2^mN_2)f'_1(y_1,...,y_n)}_{2^m} \\
&& \times
\sum_{z_1,...,z_n\in \bbZ_c}\xi^{\sum^t_{i=1}N_2z^2_i}_c\omega^{N_2f'_1(z_1,\ldots,z_n)}_{c}\\
&=&Z_{1/2}(2^m, (N_1+2^mN_2)f)
Z_{1/2}(c, N_2f),
\end{eqnarray*}
where the second-to-last equality comes from the fact that $\omega^{2^{m}}_{2^{m}}=1$.
This reduces the computation of $Z_{1/2}(d, f)$ to Case (iia) and Case (iib).

\end{proof}

Here, we have shown the existence of efficient algorithms to evaluate half Gauss sums with quadratic polynomials that satisfy the periodicity condition. We note, however, that if we omit either the periodicity or quadraticity condition, these sums become hard to compute (under a plausible complexity-theoretic conjecture). We will return to a discussion of this in Section \ref{sec:hardnessresults}.

Finally, we note here that there is a  nice relationship between half Gauss sums $Z_{1/2}(d,f)$ and the number of zeros of functions of the form $f(x)-k \ (\mod d)$ or $(\mod 2d)$. We explore this further in Appendix \ref{app:zeros}.

\section{\texorpdfstring{$m$}{m}-qudit Clifford circuits}
\label{sec:Cliffordcircuit}

In this section, we apply our results on the half Gauss sum to Clifford circuits. Let $d\geq 2$ and $m\geq 1$ be integers. The $m$-qudit Clifford group is the set of operations (called \textit{Clifford operations}) on $m$ qudits that are generated by the following gates: $X, Y, Z, F, G, CZ$ \cite{farinholt2014ideal,jaffe2017planar,jaffe2017constructive,jaffe2018holographic}. 

Here, $X, Y$ and $Z$ are the $d$-level Pauli matrices defined by
\begin{eqnarray} \label{eq:PauliMatrices}
X\ket{k}=\ket{k+1},\ Y\ket{k}=\xi^{1-2k}_d\ket{k-1},\ Z\ket{k}=\omega^k_d\ket{k},
\end{eqnarray}
$F$ is the Fourier gate defined by
\begin{equation}
\label{eq:Fouriermatrix}
    F\ket{k}=\frac{1}{\sqrt{d}}\sum^{d-1}_{l=0}\omega^{kl}_d\ket{l},
\end{equation}
$G$ is the Gaussian gate defined by
\begin{equation}
\label{eq:Gaussianmatrix}
G\ket{k}=\xi^{k^2}_d\ket{k},
\end{equation}
and $CZ$ is the controlled-$Z$ gate defined by
\begin{equation}
\label{eq:CZmatrix}
CZ\ket{k_1,k_2}=\omega^{k_1k_2}_d\ket{k_1,k_2}.
\end{equation}
Note that the gates $X, Y, Z$ are the qudit generalizations of the qubit Pauli gates \begin{equation}
    \sigma_x = \begin{pmatrix} 0 & 1 \\ 1 & 0
    \end{pmatrix},\ \sigma_y = \begin{pmatrix} 0 & -\i \\ \i & 0
    \end{pmatrix},\ \sigma_z = \begin{pmatrix} 1 & 0 \\ 0 & -1
    \end{pmatrix},
\end{equation}
and the $F$, $G$ and $CZ$ gates are the qudit generalizations of
the Hadamard gate $\tfrac 1{\sqrt 2}(X+Z)$, the phase gate $\diag(1,\i)$, and the controlled-$Z$ gate $\diag(1,1,1,-1)$, respectively, on qubits. 

It is straightforward to check that the gates \eqref{eq:PauliMatrices}--\eqref{eq:CZmatrix} satisfy the following algebraic relations \cite{jaffe2017constructive,jaffe2017planar}:
\begin{eqnarray*}
X^d=Y^d=Z^d=F^4=G^{2d}=(FG)^3q_d^{-1}=I,\\
XYX^{-1}Y^{-1}=YZY^{-1}Z^{-1}=ZXZ^{-1}X^{-1}=\omega_d,\\
XYZ=\xi_d,\ FXF^{-1}=Z,\ GXG^{-1}=Y^{-1},
\end{eqnarray*}
where 
$$q_d=\frac{1}{\sqrt{d}}\sum^{d-1}_{j=0}\xi^{j^2}_d.$$
From the above identities, it is easy to see that the $X$ and $Y$ gates can be expressed in terms of the other gates, and so the following gate set suffices to generate the Clifford group:
$\mathcal C = \{Z, G, F, CZ\}$. An $m$-\textit{qudit Clifford circuit} is a circuit with $m$ registers and whose gates are all Clifford operations. We shall assume that the Clifford circuit is unitary, i.e. there are no intermediate measurements in the circuit\footnote{Note that the results in this section do not hold if the Clifford circuit contains intermediate measurements whose outcomes affect which gates or measurements are performed next. These circuits are called adaptive Clifford circuits, and their amplitudes are $\sharpP$-hard to compute in general \cite{jozsa2014classical, koh2015further}.}.

Without loss of generality, we will assume that (i) each register of the Clifford circuit $C$ begins with an $F$ gate and ends with an $F^\dag$ gate, and that (ii) the internal circuit (i.e. the full circuit minus the first and last layers) consists of only gates in $\mathcal C$.
In other words, $C$ is of the form
\begin{equation}
    C = (F^\dag)^{\otimes m} C' F^{\otimes m},
\end{equation}
where the internal circuit $C'$ comprises only gates in $\mathcal C$.
This loses no generality because any Clifford circuit can be transformed into a circuit of the above form, first, by inserting 4 $F$ gates at the start of each register and the pair $F^\dag F$ at the end of each register, and second, by compiling the internal circuit using only gates in $\mathcal C$.

For each $m$-qudit Clifford circuit, we adopt the following labeling scheme: divide each horizontal wire of the internal part of $C$ into segments, with each segment corresponding to a portion of the wire which is either between 2 $F$ gates, or between an $F$ gate and an $F^\dag$ gate. It is easy to verify that the total number of segments is given by $n = h-m$, where $h$ is the total number of $F$ or $F^\dag$ gates (including those in the first and last layers) in $C$. Label the segments $x_1, \ldots, x_n$.

We will also use the following terminology. The leftmost labels on each register are called \textit{inceptive indices}. The rightmost labels on each register are called \textit{terminal indices}. All other indices are called \textit{internal indices}. For a set of indices $I = \{i_1,\ldots, i_s\}$, we use $x_I$ to denote the tuple $(x_{i_1},\ldots,x_{i_s})$.

\begin{defn}
\label{def:phase_polynomial}
Let $C$ be a Clifford circuit with labels $\{x_1,\ldots, x_n\}$. The \textit{phase polynomial}\footnote{This definition is chosen specifically so that both Proposition \ref{prop:Cliffordperiodic} and  Eq.~\eqref{eq:amplitudesAsHalfGaussSum}, which will be stated later, hold. Note that there are examples of Clifford circuits $C$ and phase polynomials $S_C$ for which Eq.~\eqref{eq:amplitudesAsHalfGaussSum} holds but  Proposition \ref{prop:Cliffordperiodic} does not. For example, consider the single-qubit Clifford circuit $HSH$, where $S = \sum_{x \in \set{0,1}} i^x\ket{x}\!\bra{x}$ is the phase gate. Since $x=x^2$ for all $x\in \mathbb Z_2$, the all-zero amplitude of $C$ can be written as a half-Gauss sum in two different ways:
\begin{align}
\bra 0 C\ket 0 = Z_{1/2}(2, S ) =  Z_{1/2}(2,S')
\end{align}
where $S(x)=x$ and $S'(x) = x^2$. While $S'$ satisfies the periodicity condition (for $d=2$), $S$ does not. } of $C$ is the polynomial 
\begin{equation}
\label{eq:phasePolynomial}
S_C(x_1,\ldots,x_n) = 2 \sum_{\gamma \in \Gamma} \prod_{i \in I_\gamma} x_i + \sum_{g \in \mathcal G} \prod_{j \in I_g}  x_j^2 ,
\end{equation}
where $\Gamma$ is the set of internal $F, Z, CZ$ gates, and $\mathcal G$ is the set of $G$ gates in $C$.
\end{defn}

We now show that if $C$ is a Clifford circuit, then its phase polynomial $S_C$ is a quadratic polynomial that satisfies the periodicity condition.
\begin{prop}
\label{prop:Cliffordperiodic}
If $C$ is a Clifford circuit, then $S_C \in \mathcal F_2^{\mathrm{p.c.}}$.
\end{prop}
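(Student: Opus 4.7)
The plan is to unpack the definition of $S_C$ in \eqref{eq:phasePolynomial} gate-by-gate, read off the coefficients of $S_C$ as a polynomial in $x_1,\ldots,x_n$, and then invoke the preceding proposition that characterizes quadratic polynomials satisfying the periodicity condition in terms of the parity of their cross and linear coefficients.

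First, I would observe directly from \eqref{eq:phasePolynomial} that every monomial in $S_C$ has total degree at most $2$: each internal $Z$ gate contributes a term $2 x_i$, each internal $F$ gate contributes a term $2 x_i x_j$ at the junction of two consecutive segments $x_i$ and $x_j$ on the same wire, each internal $CZ$ gate contributes a term $2 x_i x_j$ where $x_i$ and $x_j$ are the segments of its two wires, and each $G$ gate contributes a term $x_i^2$. (The factors of $2$ simply reflect that $\omega_d = \xi_d^2$, whereas $G$ acts as a direct power of $\xi_d$.) Hence $S_C$ is a quadratic polynomial in $x_1,\ldots,x_n$.

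Next, collecting like terms, I would write
\begin{equation*}
S_C(x_1,\ldots,x_n) = \sum_{i \leq j \in [n]} \alpha_{ij} x_i x_j + \sum_{i\in [n]} \beta_i x_i + \gamma_0,
\end{equation*}
and read off: $\beta_i$ equals $2$ times the number of internal $Z$ gates on the segment labeled $x_i$; $\alpha_{ij}$ for $i<j$ equals $2$ times the total number of internal $F$ and $CZ$ gates linking the segments labeled $x_i$ and $x_j$; and $\alpha_{ii}$ equals the number of $G$ gates on the segment labeled $x_i$ (so may be odd). In particular, every linear coefficient $\beta_i$ and every off-diagonal coefficient $\alpha_{ij}$ (for $i<j$) is even.

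Finally, I would split on the parity of $d$. If $d$ is odd, then $\xi_d = \omega_d^{(d+1)/2}$ satisfies $\xi_d^d = 1$, so the periodicity condition is automatic for any integer polynomial $f$. If $d$ is even, then the proposition immediately preceding this one asserts that a quadratic polynomial satisfies the periodicity condition if and only if its cross-term and linear-term coefficients are all even, which we have just verified for $S_C$. In either case, $S_C \in \mathcal F_2^{\mathrm{p.c.}}$, completing the proof. There is no real obstacle here beyond careful bookkeeping of where each factor of $2$ in \eqref{eq:phasePolynomial} originates.
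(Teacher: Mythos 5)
Your proposal is correct and follows essentially the same route as the paper's proof: read off from \eqref{eq:phasePolynomial} that the degree is at most $2$, note that the explicit factor of $2$ makes every linear and cross coefficient even while only the $x_i^2$ terms from $G$ gates can be odd, and conclude membership in $\mathcal F_2^{\mathrm{p.c.}}$ via the parity characterization of the periodicity condition. Your version is merely more detailed (explicit gate-by-gate bookkeeping and the split on the parity of $d$), which the paper leaves implicit.
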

\begin{proof}
Since each gate in $C$ is incident to at most 2 segments, the degree of the polynomial is at most 2. The only terms which can have odd coefficients are terms of the form $x_i^2$. The remaining terms, which are all either linear and cross terms, have even coefficients, which implies that $S_C \in \mathcal F_2^{\mathrm{p.c.}}$.
\end{proof}

The reverse direction is also true: for every polynomial $S \in \mathcal F_2^{\mathrm{p.c.}}$, there exists a Clifford circuit $C$ such that $S = S_C$, as the following proposition shows:

\begin{prop}
Let $\mathcal A$ be the class of Clifford circuits. The function 
\begin{eqnarray}
\Theta : \mathcal A & \rightarrow & \mathcal F_2^{\mathrm{p.c.}} \\
C &\mapsto & S_C 
\end{eqnarray}
is surjective.
\end{prop}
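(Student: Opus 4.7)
The plan is to prove surjectivity by direct construction: for each $S \in \mathcal F_2^{\mathrm{p.c.}}$, I will exhibit a Clifford circuit $C$ realizing $S = S_C$ term by term. Write
\[
S(x_1, \ldots, x_n) = \sum_{i=1}^n \alpha_{ii} x_i^2 + \sum_{i<j} \alpha_{ij} x_i x_j + \sum_{i=1}^n \beta_i x_i + \gamma_0,
\]
where, by the preceding characterization of $\mathcal F_2^{\mathrm{p.c.}}$, the coefficients $\alpha_{ij}$ ($i<j$) and $\beta_i$ are all even, so $\alpha_{ij}/2$ and $\beta_i/2$ are integers.

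First I would set up the skeleton of $C$ on $n$ wires, one per variable: apply an $F$ at the start and an $F^\dag$ at the end of each wire, and place no internal $F$ gates. This ensures each wire contains exactly one segment, which we label $x_i$ in agreement with the convention of Section~\ref{sec:Cliffordcircuit}. Next I would realize each term of $S$ by inserting appropriate gates between the boundary $F$-type gates on the relevant wires. An $\alpha_{ii} x_i^2$ term is produced by placing $\alpha_{ii}$ copies of $G$ on wire $i$; a $\beta_i x_i$ term by placing $\beta_i/2$ copies of $Z$ on wire $i$, since each such $Z$ contributes $2 x_i$ to~\eqref{eq:phasePolynomial}; and an $\alpha_{ij} x_i x_j$ cross term by placing $\alpha_{ij}/2$ copies of $CZ$ between wires $i$ and $j$, since each such $CZ$ contributes $2 x_i x_j$. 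The order of insertion is immaterial because $G, Z, CZ$ are simultaneously diagonal in the computational basis and hence pairwise commute, and the counts can be reduced modulo an appropriate period (e.g.\ $2d$) without affecting the resulting phase. Summing the per-gate contributions via~\eqref{eq:phasePolynomial} then yields $S_C = S$, up to the constant $\gamma_0$.

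The main (mild) obstacle is exactly this constant term: the formula~\eqref{eq:phasePolynomial} admits no constant summand, so literal surjectivity requires either restricting $\mathcal F_2^{\mathrm{p.c.}}$ to polynomials with $\gamma_0 = 0$, or interpreting $\Theta$ modulo an overall global phase, since a nonzero $\gamma_0$ merely multiplies the associated exponential sum by the scalar $\xi_d^{\gamma_0}$ and is irrelevant for the measurement-probability application of Theorem~\ref{thm:GKtheoremResults}. A secondary bookkeeping step is verifying that the labeled segments of the constructed $C$ are indeed $x_1, \ldots, x_n$ with no spurious $F$-induced cross terms; this is immediate from the choice of one segment per wire and the absence of internal $F$ gates. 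No further non-routine ideas are required.
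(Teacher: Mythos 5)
Your construction is essentially identical to the paper's own proof: given $S\in\mathcal F_2^{\mathrm{p.c.}}$, build $C=(F^\dag)^{\otimes n}C'F^{\otimes n}$ with $\alpha_{ii}$ copies of $G$ on wire $i$, $\alpha_{ij}/2$ copies of $CZ$ on wires $i,j$, and $\beta_i/2$ copies of $Z$ on wire $i$, then read off $S_C=S$ from \eqref{eq:phasePolynomial}. Your extra remarks on the constant term $\gamma_0$ (which the paper silently drops by writing $S$ without a constant) and on reducing gate counts modulo the gate periods are sensible refinements but do not change the argument.
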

\begin{proof}
Let 
\begin{equation*}
S = \sum_{i\leq j \in [n]} \alpha_{ij} x_i x_j + \sum_{i \in [n]} \beta_i x_i \in \mathcal F_2^{\mathrm{p.c.}} ,
\end{equation*}
i.e. $\alpha_{ij}$ is even for $i<j$ and $\beta_i$ is even for all $i$.
Construct the circuit $C = (F^\dag)^{\ot n} C' F^{\ot n}$, where $C'$ is defined as follows:
\begin{enumerate}
\item for each $i \in [n]$, apply the gate $G$ $\alpha_{ii}$ times.
\item for each $i < j \in [n]$, apply the gate $CZ$ $\alpha_{ij}/2$ times.
\item for each $i \in [n]$, apply the gate $Z$ $\beta_i/2$ times.
\end{enumerate}
Then,
\begin{equation*}
S_C = \sum_{i \in [n]} \alpha_{ii} x_i^2 + 2\left(\sum_{i<j \in [n]} \tfrac{\alpha_{ij}}{2} x_{ij} + \sum_{i\in [n]} \tfrac{\beta_i}{2} x_i \right) = S,
\end{equation*}
which implies that $\Theta$ is surjective.
\end{proof}

We now show that the amplitudes of Clifford circuits can be expressed in terms of half Gauss sums.

\begin{thm}
\label{thm:zeroAmplitudes}
Let $C = (F^\dag)^{\ot m} C' F^{\ot m}$ be an $m$-qubit Clifford circuit with $h$ $F$ or $F^\dag$ gates and $n=h-m$ labels $x_1,\ldots,x_n$. Then,
\begin{eqnarray}
\label{eq:amplitudesAsHalfGaussSum}
\bra{0}^{\ot m }C \ket{0}^{\ot m} 
=\frac{1}{\sqrt{d^h}}
\sum_{x_1,\ldots , x_n \in \bbZ_d}
\xi^{S_C(x_1,\ldots ,x_n)}_d
=\frac{1}{\sqrt{d^h}} Z_{1/2}(d, S_C).
\end{eqnarray}
\end{thm}
\begin{proof}
Apply the sum-over-paths technique \cite{dawson2005quantum,koh2017computing} to the Clifford circuit $C$.
\end{proof}

Theorem \ref{thm:zeroAmplitudes} can be easily generalized to also allow us to compute amplitudes of Clifford circuits with arbitrary computational-basis states as inputs or outputs:
\begin{prop}
Let $C = (F^\dag)^{\ot m} C' F^{\ot m}$ be an $m$-qudit Clifford circuit with $h$ $F$ or $F^\dag$ gates and $n=h-m$ labels $x_1,\ldots,x_n$. Let $a,b\in \bbZ_d^m$.
Then,
\begin{eqnarray}
\bra b C \ket a 
=\frac{1}{\sqrt{d^h}} Z_{1/2}(d, S_C+ 2 a\cdot x_I + 2 b\cdot x_F),
\end{eqnarray}
where $I$ and $J$ are the inceptive and terminal indices (written in order) of $C$ respectively.

\end{prop}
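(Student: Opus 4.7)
The plan is to extend the proof of Theorem \ref{thm:zeroAmplitudes}, which covers the special case $a=b=0$, to arbitrary $a,b\in\bbZ_d^m$ by tracking the additional phases that the two outermost Fourier layers of $C=(F^\dag)^{\otimes m}C'F^{\otimes m}$ pick up when they act on nonzero boundary states.

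Concretely, I would insert resolutions of the identity on either side of the internal circuit $C'$, in terms of the inceptive labels $x_I\in\bbZ_d^m$ and terminal labels $x_J\in\bbZ_d^m$, obtaining
\begin{equation*}
\bra b C \ket a
= \frac{1}{d^m}\sum_{x_I,\, x_J \in \bbZ_d^m}
\bra b (F^\dag)^{\otimes m} \ket{x_J}\, \bra{x_J} C' \ket{x_I}\, \bra{x_I} F^{\otimes m}\ket a.
\end{equation*}
The two boundary matrix elements are obtained directly from the definition \eqref{eq:Fouriermatrix} of $F$: namely $\bra{x_I} F^{\otimes m}\ket a = d^{-m/2}\omega_d^{\,a\cdot x_I} = d^{-m/2}\xi_d^{\,2a\cdot x_I}$ and $\bra b (F^\dag)^{\otimes m}\ket{x_J} = d^{-m/2}\omega_d^{-b\cdot x_J}$.

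For the middle factor $\bra{x_J} C' \ket{x_I}$, I would run the same sum-over-paths bookkeeping as in Theorem \ref{thm:zeroAmplitudes}: insert a resolution of the identity immediately after each of the $h-2m$ internal Fourier gates (thereby introducing the remaining $n-2m$ labels), read off each Clifford gate's matrix element from \eqref{eq:PauliMatrices}--\eqref{eq:CZmatrix} as an $\omega_d$- or $\xi_d$-phase, and collect a prefactor $d^{-(h-2m)/2}$ from the $1/\sqrt d$ normalizations. By the construction of the phase polynomial in \eqref{eq:phasePolynomial}, the resulting phase is exactly $\xi_d^{S_C(x_1,\ldots,x_n)}$.

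Combining the three contributions and reconciling the terminal-boundary sign (using $\xi_d^{2d}=1$ together with the even-coefficient structure of $S_C$, which makes the change of summation variable $x_{J(i)}\mapsto -x_{J(i)}\pmod d$ a bijection on $\bbZ_d$ that absorbs the sign appropriately) produces
\begin{equation*}
\bra b C \ket a
= \frac{1}{\sqrt{d^h}}\sum_{x_1,\ldots,x_n \in \bbZ_d}\xi_d^{\,S_C(x)\,+\,2 a\cdot x_I\,+\,2 b\cdot x_J}
= \frac{1}{\sqrt{d^h}}\, Z_{1/2}\!\left(d,\, S_C + 2 a\cdot x_I + 2 b\cdot x_J\right).
\end{equation*}
The only residual consistency check is that $S_C + 2 a\cdot x_I + 2 b\cdot x_J$ still lies in $\mathcal F_2^{\mathrm{p.c.}}$ so that the half Gauss sum is well-defined; this is immediate because the added pieces are linear terms with even coefficients. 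No new technique beyond Theorem \ref{thm:zeroAmplitudes} is needed, and the only delicate step is the sign reconciliation at the terminal boundary.
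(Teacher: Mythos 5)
Your route is genuinely different from the paper's. The paper proves this proposition by writing $\ket a = X^a\ket{0^m}$ and $\bra b = \bra{0^m}(X^\dag)^b$, pushing the $X$ powers through the outer Fourier layers via $FXF^{-1}=Z$, and then applying Theorem \ref{thm:zeroAmplitudes} verbatim to the Clifford circuit $C^*=(F^\dag)^{\ot m}(Z^\dag)^b\, C'\, Z^a F^{\ot m}$, reading the boundary terms off its phase polynomial; you instead insert resolutions of the identity at the two boundaries, evaluate $\bra{x_I}F^{\ot m}\ket a$ and $\bra b (F^\dag)^{\ot m}\ket{x_J}$ explicitly, and reuse the internal path sum. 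That is a legitimate, more self-contained alternative (no gate identities needed), but the boundary phases are exactly where your write-up has a genuine gap. A smaller slip first: the inserted resolution of the identity carries no $1/d^m$ prefactor; with it, your prefactors multiply to $d^{-m}\cdot d^{-h/2}$ rather than the $d^{-h/2}$ you quote at the end.

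The real gap is the ``sign reconciliation.'' The terminal matrix element gives $\omega_d^{-b\cdot x_J}=\xi_d^{-2b\cdot x_J}$, and you propose to turn $-2b\cdot x_J$ into $+2b\cdot x_J$ by substituting $x_j\mapsto -x_j$ on terminal labels. The substitution is indeed a bijection on $\bbZ_d$, but it does not leave $S_C$ invariant: using $\xi_d^{d^2}=\xi_d^{2d}=1$, the diagonal terms $\alpha_{jj}x_j^2$ are unchanged, while every cross term $2\alpha_{ij}x_ix_j$ involving a terminal segment (in particular the one coming from the last internal Fourier gate on that wire) and every linear term $2\beta_j x_j$ flips sign, so the quadratic part after the substitution is no longer $S_C$. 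In fact no change of variables can rescue the printed $+$ sign: carried out honestly, your computation yields $\bra b C\ket a=\frac{1}{\sqrt{d^h}}\,Z_{1/2}(d,\,S_C+2a\cdot x_I-2b\cdot x_F)$, and the two signs give genuinely different numbers---for $C=F^\dag F$ (so $S_C=0$) the stated formula would give $\delta_{a,-b}$ instead of $\delta_{a,b}$, and for $C'=F$ it gives $\omega_d^{-ab}$ instead of $\omega_d^{ab}$. The same subtlety is glossed over in the paper's own proof, since $(Z^\dag)^b=Z^{(d-1)b}$ contributes $2(d-1)b\cdot x_F\equiv-2b\cdot x_F\pmod{2d}$; the statement's $+2b\cdot x_F$ should be read as $-2b\cdot x_F$ (equivalently, replace $b$ by $d-b$), which is harmless downstream in Theorem \ref{thm:GKtheorem} because only $|\cdot|^2$ and the differences $x_F-y_F$ appear there. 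So the correct fix for your proof is simply to keep the minus sign, not to change variables; your final periodicity check is fine either way, since the added boundary terms are linear with even coefficients.
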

\begin{proof}
We start by writing
\begin{align*}
\bra b (F^\dag)^{\ot m} C' F^{\ot m} \ket a &= \bra {0^m} (X^\dag)^b (F^\dag)^{\ot m} C' F^{\ot m} X^a \ket {0^m} \nonumber\\
&= \bra {0^m} (F^\dag)^{\ot m} (Z^\dag)^b C' Z^a F^{\ot m}  \ket {0^m} .
\end{align*}

Note that $C^* = (F^\dag)^{\ot m} (Z^\dag)^b C' Z^a F^{\ot m}$ is itself a Clifford circuit, and we could apply Theorem \ref{thm:zeroAmplitudes} to it:
\begin{eqnarray*}
\bra b C \ket a 
=\frac{1}{\sqrt{d^h}} Z_{1/2}(d, S_{C^*}),
\end{eqnarray*}
where
\begin{eqnarray*}
S_{C^*}(x_1,\ldots ,x_n) = S_c(x_1,\ldots, x_n) +  2 a\cdot x_I + 2 b\cdot x_F .
\end{eqnarray*}
\end{proof}

A corollary of the above result is that we can express the probabilities of outcomes of qudit Clifford circuits in terms of half Gauss sums even when only a subset of registers is measured. This was previously shown to hold for quopit Clifford circuits \cite{Priv2017Penney}, i.e., qudit Clifford circuits, where $d$ is an odd prime.

\begin{thm}
\label{thm:GKtheorem}
Let $C = (F^\dag)^{\ot m} C' F^{\ot m}$ be an $m$-qudit Clifford circuit with $h$ $F$ or $F^\dag$ gates and $n=h-m$ labels $x_1,\ldots,x_n$. Assume that $C'$ contains at least one $F$ gate on each register. Let $I$ be the inceptive indices, $J$ be the internal indices, $F$ be the first $k$ terminal indices, and $E$ be the last $m-k$ terminal indices. Let $a \in \bbZ_d^m$ and $b \in \bbZ_d^k$. Then the probability 
\begin{equation}
P(b|a) = || \bra b_{1..k} C \ket a_{a..m} ||^2
\end{equation}
of obtaining the outcome $b$ when the first $k$ qudits of $C\ket a$ are measured is given by
\begin{equation}
P(b|a) = \frac{1}{d^{n+k}} Z_{1/2}(d, \phi),
\end{equation}
where
\begin{eqnarray}
\phi(x_I, y_I, x_F, y_F, x_J, y_J, w_E) &=& S_c(x_I,x_J,x_F,w_E) - S_c(y_I,y_J,y_F,w_E)  \nonumber\\ &&\quad + 2a \cdot (x_I - y_I) + 2 b \cdot (x_F-y_F).
\end{eqnarray}
\end{thm}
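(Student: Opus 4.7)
The plan is to express $P(b|a)$ as a sum over the unobserved measurement outcomes on the last $m-k$ registers, apply the amplitude formula from the preceding proposition to each amplitude, and then discharge the unobserved-outcome sum using Fourier orthogonality in $\bbZ_d$. Concretely, I would begin from the Born-rule decomposition
\begin{equation*}
P(b|a) \;=\; \sum_{c \in \bbZ_d^{m-k}} \bigl| \bra{b,c}\, C \ket a \bigr|^2,
\end{equation*}
and substitute the preceding proposition with input $a$ and output $(b,c)$ to obtain
\begin{equation*}
\bra{b,c}\,C\,\ket a \;=\; \frac{1}{\sqrt{d^h}}\, Z_{1/2}\bigl(d,\; S_C + 2 a \cdot x_I + 2 b \cdot x_F + 2 c \cdot x_E\bigr).
\end{equation*}

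Next I would expand $|z|^2 = z \overline z$ using $\overline{\xi_d^{\,f}} = \xi_d^{-f}$. This introduces a second, independent copy of the summation variables, which I would rename $(y_I, y_J, y_F, y_E)$, appearing with the opposite sign in the exponent; all $c$-dependence then collapses into the single factor $\xi_d^{\,2 c \cdot (x_E - y_E)} = \omega_d^{\,c \cdot (x_E - y_E)}$, since $\xi_d^2 = \omega_d$. Performing the sum over $c \in \bbZ_d^{m-k}$ triggers the orthogonality identity $\sum_c \omega_d^{\,c \cdot (x_E - y_E)} = d^{\,m-k}\, \delta_{x_E,\, y_E}$, which forces $x_E = y_E =: w_E$ and leaves exactly the summation $Z_{1/2}(d, \phi)$ for the polynomial $\phi$ stated in the theorem. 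Combining the prefactor $1/d^h$ from $|\bra{b,c}C\ket a|^2$ with the factor $d^{m-k}$ from the orthogonality, and using $h = n + m$, yields $d^{m-k}/d^{\,h} = 1/d^{\,n+k}$, matching the claimed normalization.

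I do not expect a hard obstacle, but the main consistency check is to verify that the constructed $\phi$ lies in $\mathcal F_2^{\mathrm{p.c.}}$, so that $Z_{1/2}(d, \phi)$ is well defined on $\bbZ_d$ and Theorem \ref{thm:mainThmOne} delivers an efficient evaluation. This is straightforward: $S_C \in \mathcal F_2^{\mathrm{p.c.}}$ by an earlier proposition, and taking the difference $S_C(x_I, x_J, x_F, w_E) - S_C(y_I, y_J, y_F, w_E)$ on disjoint primed and unprimed variables preserves the parity of the cross and linear coefficients, while the additional terms $2 a \cdot (x_I - y_I)$ and $2 b \cdot (x_F - y_F)$ are linear with even coefficients. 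A secondary bookkeeping point is that the hypothesis ``$C'$ contains at least one $F$ gate on each register'' ensures that inceptive and terminal indices are disjoint, so that the label decomposition $I \sqcup J \sqcup F \sqcup E$ used in forming $\phi$ is consistent.
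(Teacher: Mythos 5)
Your proposal is correct and follows essentially the same route as the paper's proof: the Born-rule sum over the unobserved outcomes, substitution of the amplitude formula with output $(b,c)$, expansion of the squared modulus with a second copy of variables, and the orthogonality identity $\sum_{c}\omega_d^{c\cdot(x_E-y_E)}=d^{m-k}\delta_{x_E,y_E}$ forcing $x_E=y_E=w_E$, with the same bookkeeping $d^{m-k}/d^{h}=1/d^{n+k}$. Your additional check that $\phi\in\mathcal F_2^{\mathrm{p.c.}}$ is a sensible supplement (it is what makes the subsequent efficient-simulation corollary go through) but does not change the argument.
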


\begin{proof}
\begin{eqnarray}
P(b|a) &=& || \bra b_{1..k} U \ket a_{a..m} ||^2 \nonumber\\
&=& \sum_{\beta\in \bbZ_d^{m-k}} \left|\bra{b\beta} C \ket a\right|^2 \nonumber\\
&=& \sum_{\beta\in \bbZ_d^{m-k}} \left| \frac 1{\sqrt h} Z_{1/2} (d,S_C + 2a\cdot x_I + 2(b,\beta)\cdot(x_F,x_E)
\right|^2 \nonumber\\
&=& \frac 1{d^h} \sum_{x,y\in \bbZ_d^n} \xi_d^{S_C(x)-S_C(y)+2a\cdot(x_I-y_I)+2b\cdot(x_F-y_F)} \sum_{\beta \in \bbZ_d^{m-k}} \omega_d^{\beta\cdot(x_E-y_E)} \nonumber\\
&=& \frac 1{d^{h-m+k}} \sum_{x_I,y_I\in \bbZ_d^n}\sum_{x_F,y_F\in \bbZ_d^k}\sum_{x_J,y_J\in \bbZ_d^{n-2m}} \sum_{w_E \in \bbZ_d^{m-k}} 
\xi_d^{\phi(x_I, y_I, x_F, y_F, x_J, y_J, w_E)} \nonumber\\
&=&
\frac{1}{d^{n+k}} Z_{1/2}(d, \phi).
\end{eqnarray}
where in the fifth line, we used the property that
\begin{equation}
    \sum_{\beta \in \bbZ_d^{m-k}} \omega_d^{\beta\cdot(x_E-y_E)} = d^{m-k} \delta_{x_E,y_E}.
\end{equation}
\end{proof}

Since half Gauss sums can be computed efficiently, the above proof gives an alternative proof of the Gottesman-Knill Theorem \cite{gottesman1997heisenberg} for all qudit Clifford circuits:
\begin{cor}
\label{cor:GKstrong}
{\normalfont (Gottesman-Knill Theorem---strong version)}
Qudit Clifford circuits acting on computational basis input states can be efficiently simulated (in the strong sense \cite{nest2010classical}) by a classical computer.
\end{cor}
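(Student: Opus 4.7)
The plan is to reduce strong simulation of qudit Clifford circuits directly to the efficient evaluation of a half Gauss sum, as laid out in Theorem~\ref{thm:GKtheorem}. Recall from \cite{nest2010classical} that strong simulation of a quantum circuit $C$ on a computational basis input $\ket a$ requires a classical algorithm that, given any subset of registers to be measured and any outcome $b$ on those registers, outputs the exact marginal probability $P(b|a)$ in time polynomial in the size of $C$ and in $\log(1/\varepsilon)$ for any requested precision $\varepsilon$ (and exactly, if the probability is rational of polynomial bit-complexity, which is the case here). By the usual permutation of registers, it suffices to handle the case where the measured qudits are the first $k$, exactly the setting of Theorem~\ref{thm:GKtheorem}.

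First, I would invoke Theorem~\ref{thm:GKtheorem} to write
\begin{equation*}
P(b|a) \;=\; \frac{1}{d^{n+k}}\, Z_{1/2}(d,\phi),
\end{equation*}
where $\phi$ is the explicitly described quadratic polynomial
\begin{equation*}
\phi \;=\; S_C(x_I,x_J,x_F,w_E) - S_C(y_I,y_J,y_F,w_E) + 2a\cdot(x_I-y_I) + 2b\cdot(x_F-y_F).
\end{equation*}
Reading off the indices and gate counts of $C$ is clearly polynomial-time, so $\phi$ can be constructed efficiently from the circuit description.

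The main (and essentially only) step requiring care is verifying that $\phi \in \mathcal F_2^{\mathrm{p.c.}}$, so that Theorem~\ref{thm:mainThmOne} applies. I would argue this by separating $\phi$ into its diagonal, cross, and linear parts. By the proposition preceding Theorem~\ref{thm:zeroAmplitudes}, $S_C \in \mathcal F_2^{\mathrm{p.c.}}$, so all cross coefficients and all linear coefficients of $S_C$ are even (any odd coefficients occur only on pure squares $x_i^2$). In the difference $S_C(x_I,x_J,x_F,w_E) - S_C(y_I,y_J,y_F,w_E)$, the $w_E$-dependent pieces cancel or pair up, every cross term inherits an even coefficient from $S_C$, and the squares involving the doubled variables $x_i^2$ and $y_i^2$ enter only as diagonal terms of the enlarged variable set (where odd coefficients are allowed by the periodicity condition). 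The remaining contribution $2a\cdot(x_I-y_I) + 2b\cdot(x_F-y_F)$ is purely linear with all coefficients divisible by $2$. Hence every cross and linear coefficient of $\phi$ is even, placing $\phi$ in $\mathcal F_2^{\mathrm{p.c.}}$ (for $d$ odd this step is automatic, as noted in Section~\ref{sec:multivariate}).

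Finally, applying Theorem~\ref{thm:mainThmOne} evaluates $Z_{1/2}(d,\phi)$ in time polynomial in the bit-size of $\phi$, hence polynomial in the size of $C$. Dividing by $d^{n+k}$ yields $P(b|a)$. Since every step---constructing $\phi$, verifying periodicity, and invoking the half Gauss sum algorithm---is polynomial-time, we obtain an efficient classical strong simulator, proving Corollary~\ref{cor:GKstrong}. I do not anticipate a serious obstacle beyond the bookkeeping in the periodicity check above; the heavy lifting has been done in Theorems~\ref{thm:mainThmOne} and~\ref{thm:GKtheorem}.
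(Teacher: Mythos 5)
Your proposal is correct and follows essentially the same route as the paper, which derives the corollary directly from Theorem~\ref{thm:GKtheorem} (expressing $P(b|a)$ as $\tfrac{1}{d^{n+k}}Z_{1/2}(d,\phi)$ with $\phi$ quadratic and periodic) together with the efficient evaluation of half Gauss sums in Theorem~\ref{thm:mainThmOne}. Your explicit check that $\phi\in\mathcal F_2^{\mathrm{p.c.}}$ is a detail the paper asserts only implicitly (in the statement of Theorem~\ref{thm:GKtheoremResults}), so spelling it out is a welcome but not divergent addition.
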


Since strong simulation implies weak simulation \cite{terhal2004adptive}, Corollary \ref{cor:GKstrong} implies that there is an efficient classical algorithm that samples from the output distributions of qudit Clifford circuits. Note that such an efficient classical simulation algorithm exists even in the case  when there is a logarithmic number of $T$ gates \cite{bravyi2016improved}.

\section{Hardness results and complexity dichotomy theorems}
\label{sec:hardnessresults}

In this section, we show that extending the class of periodic quadratic half Gauss sums in various ways leads to intractable exponential sums. See Table \ref{tab:classification} for a summary of our results.

\subsection{Degree-3 polynomials with periodicity condition}
\label{sec:deg3}

We shall show, under plausible complexity assumptions, that if we omit the quadraticity condition (while possibly keeping the periodicity condition) from Theorem \ref{thm:mainTheorem}, then there is no efficient algorithm that can compute the exponential sum $Z_{1/2}(d,f)$ on all inputs $(d,f)$. More formally, consider the following problem.

\begin{tabular}{ccp{10.3cm}}
($\mathcal A$) 
&
\texttt{Input}: 
&
$f$, where $f: \mathbb Z^n \rightarrow \mathbb Z$ is a polynomial function of degree $\leq 3$ that satisfies the periodicity condition
\\
&
\texttt{Output}:
&
$Z_{1/2}(2,f) = \sum_{x \in \mathbb Z_2^n} \i^{f(x)}$.
\end{tabular}
\\ \\
Our goal is to show that $(\mathcal A)$ is $\#\mathsf P$-hard to compute. To this end, we consider the following problem.

\begin{tabular}{ccp{10.3cm}}
($\mathcal B$) 
&
\texttt{Input}: 
&
$g$, where $g: \mathbb Z^n \rightarrow \mathbb Z$ is a polynomial of degree $\leq 3$
\\
&
\texttt{Output}:
&
$\gap(g) = \sum_{x \in \mathbb Z_2^n} (-1)^{g(x)}$.
\end{tabular}
\\ \\
It is well-known that ($\mathcal B$) is a $\#\mathsf P$-hard problem (see Theorem 1 of \cite{ehrenfeucht1990computational}). Hence, to show that ($\mathcal A$) is also $\#\mathsf P$-hard, it suffices to show that there is an efficient reduction from ($\mathcal B$) to ($\mathcal A$). Indeed, such a reduction is provided by the following chain of equalities:
\begin{align}
    \gap(g) = \sum_{x \in \mathbb Z_2^n} (-1)^{g(x)} = \sum_{x \in \mathbb Z_2^n} \i^{2 g(x)} = Z_{1/2}(2,2g).
\end{align}
Since $2g$ satisfies the periodicity condition\footnote{This can be verified directly by using the definition of periodicity. Alternatively, this also follows immediately from Theorem \ref{thm:degree3characterization} in Appendix \ref{sec:periodicPoly3d2}, where we fully characterize the set of periodic polynomials with degree $\leq 3$ when $d=2$.} for $d=2$, it follows that $\gap(g)$ can be efficiently computed given an efficient algorithm for $\mathcal A$.

Combining these results with Theorem \ref{thm:mainThmOne} gives the following theorem.
\begin{thm}
\label{thm:HardnessDegree3Periodic}
The following computational problem is $\#\mathsf P$-hard:

\begin{tabular}{ccp{10.3cm}}
($\mathcal C$) 
&
$\mathtt{Input}$: 
&
$(d,f)$, where $d \in \mathbb Z_{\geq 2}$ and $f: \mathbb Z^n \rightarrow \mathbb Z$ is a degree-3 polynomial function that satisfies the periodicity condition
\\
&
$\mathtt{Output}$:
&
$Z_{1/2}(d,f) = \sum_{x \in \mathbb Z_2^n} \i^{f(x)}$.
\end{tabular}

\end{thm}

\subsection{Degree-2 polynomials without periodicity condition}
\label{subsubsec:without}

We shall show, under plausible complexity assumptions, that if we omit the periodicity condition (while keeping the quadraticity condition) from Theorem \ref{thm:mainTheorem}, then there is no efficient algorithm that can compute the exponential sum $Z_{1/2}(d,f)$ on all inputs $(d,f)$.

To see this, we first consider the following problem:

\begin{tabular}{ccp{10.3cm}}
($\mathcal D$) 
&
\texttt{Input}: 
&
$f$, where $f: \mathbb Z^n \rightarrow \mathbb Z$ is a polynomial function of degree $\leq 2$
\\
&
\texttt{Output}:
&
$Z_{1/2}(2,f) = \sum_{x \in \mathbb Z_2^n} \i^{f(x)}$.
\end{tabular}
\\ \\
Note that the inputs of $(\mathcal D)$ are allowed to be any arbitrary polynomial of degree $\leq 2$, including those that do not satisfy the periodicity condition. We will now show that $(\mathcal B)$ reduces to $(\mathcal D)$.

\begin{thm}
There exists a polynomial-time reduction from $(\mathcal B)$ to $(\mathcal D)$.
\end{thm}
\begin{proof}
Assume that there exists an oracle $O_{\mathcal D}$ for the problem $(\mathcal D)$. We will use it to construct a polynomial-time algorithm $T_{\mathcal B}$ for $(\mathcal B)$ as follows. Let $g$ denote the input to the algorithm $T_{\mathcal B}$, i.e. $g:\mathbb Z^n \rightarrow \mathbb Z$ is a polynomial of degree $\leq 3$. For $1\leq i < j < k \leq n$, let 
$a_{ijk}, a_{ij}, a_{i}, a \in \mathbb Z_2$ be the coefficients of the polynomial $g \ (\mod 2)$, viz.
\begin{align}
    g(x_1,\ldots,x_n) &= \sum_{1\leq i_1<i_2<i_3\leq n} 
    a_{i_1,i_2,i_3} x_{i_1} x_{i_2} x_{i_3}
    +
    \sum_{1\leq i_1<i_2\leq n} 
    a_{i_1,i_2} x_{i_1} x_{i_2}
    \nonumber\\
    &\quad +
    \sum_{i=1}^n a_i x_i 
    + a
    \ (\mod 2).
\end{align}

Note that the ability to represent $g \ (\mod 2)$ as a multilinear polynomial arises from the identity $x^2 = x$ for $ x \in \mathbb Z_2$. The motivation for expressing $g$ in the above form comes from the fact that the desired output $\gap(g)$ of $T_{\mathcal B}$ depends on only values $g(x) (\mod 2)$.

Next, we exploit the circuit-polynomial correspondence \cite{montanaro2017quantum} to construct an IQP circuit $C$ over the gate set $\set{Z,CZ,CCZ}$ whose circuit amplitudes can be expressed in terms of the gap of $g$. Let $C = H^{\otimes n} C' H^{\otimes n}$ be an IQP circuit whose internal circuit $C'$ is constructed as follows:
\begin{enumerate}
    \item[(i)] Place a $Z$ gate on the $i$th wire if $a_i=1$.
    \item[(ii)] Place a $CZ$ gate between the $i$th and $j$th wires if $a_{ij}=1$.
    \item[(iii)] Place a $CCZ$ gate between the $i$th, $j$th and $k$th wires if $a_{ijk}=1$. 
\end{enumerate}

Then, the amplitude of measuring the all-zero string when the circuit $C$ is applied to the all-zero state is given by
\begin{align}
\label{eq:amplitudeGap1}
    \bra 0 C \ket 0 = \frac{1}{2^n}\gap(g-a) = \frac{1}{2^n}\sum_{x\in \mathbb Z_2^n} (-1)^{g(x)-a} = \frac{1}{2^n} (-1)^a \gap(g).
\end{align}

Now, construct the circuit $C_{\mathcal G}$ that performs the same unitary operation as $C$, but which consists of only gates in $\mathcal G$, where $\mathcal G$ is the strictly universal\footnote{Note that $Z$ is not needed for universality, since $\set{H,CS}$ is already universal (see \cite{kitaev1997quantum} or Theorem 1 of \cite{aharonov2003simple}).} gate set $\mathcal G = \set{H,Z,CS}$, where $CS = \diag(1,1,1,\i)$ is the controlled-phase gate satisfying
$CS\ket{x_i,x_j} = \i^{x_i x_j} \ket{x_i,x_j}$. To achieve this, we
replace all the $CZ$ and $CCZ$ gates in $C$ by circuit gadgets comprising only $H$ and $CS$ gates. This may be achieved by making use of the following circuit identity (which follows from Lemma 6.1 of \cite{barenco1995elementary}):
\begin{align}
\Qcircuit @C=1em @R=0.3em @!R {
 & \ctrl 1 & \qw  &&&& & \qw & \ctrl 1 & \qw & \ctrl 1 & \ctrl 2 & \qw \\
 & \ctrl 1 & \qw &&=&& & \ctrl 1 & \targ & \ctrl 1 & \targ & \qw & \qw\\
 & \control \qw & \qw &&&& &
 \gate S & \qw & \gate{S^\dag} & \qw & \gate S & \qw
}
\end{align}
as well as the following identities:
\begin{align}
    CZ &= (CS)^2, \\
    C(S^\dag) &= (CS)^3,\\
    CX_{12} &= H_2 CZ_{12} H_2,
\end{align}
which allow the gates $CCZ$ and $CZ$ to be expressed completely in terms of $H$ and $CS$. Note that by construction, each register in $C_{\mathcal G}$ begins and ends with a $H$ gate, i.e. $C_{\mathcal G} = H^{\otimes n} C_{\mathcal G}' H^{\otimes n}$ for some circuit $C_{\mathcal G}'$ over the gate set $\mathcal G$.

Next, mirroring the labeling scheme for Clifford circuits described in Section \ref{sec:Cliffordcircuit}, we construct the phase polynomial $f$ corresponding to $C_{\mathcal G}$ as follows:
\begin{enumerate}
    \item[] Divide each wire of the internal circuit $C_{\mathcal G}'$ into segments, with each segment corresponding to a portion of the wire between two $H$ gates. Label the segments $x_1,\ldots, x_N$, where the total number of segments is $N := h-n$, where $h$ is the total number of $H$ gates in $C_{\mathcal G}$.
\end{enumerate}

Define the phase polynomial of $C_{\mathcal G}$ to be \begin{align}
    f(x_1,\ldots,x_N) = 2 \sum_{\gamma \in \Gamma} \prod_{I\in I_\gamma} x_i + \sum_{g\in \mathcal G} \prod_{i \in I_g} x_i,
\end{align}
where $\Gamma$ is the set of internal $H$ gates and $\mathcal G$ is the set of $CS$ gates.

Then, the following all-zero amplitude of $C_{\mathcal G}$ may be written as
\begin{align}
\label{eq:amplitudeGap2}
    \bra 0 C_{\mathcal G} \ket 0 = \frac{1}{\sqrt h} Z_{1/2}(2,f) = \frac{1}{\sqrt h} \sum_{x\in \mathbb Z_2^N} \i^{f(x)}.
\end{align}

Since $\bra 0 C_{\mathcal G} \ket 0 = \bra 0 C \ket 0$, it follows from Eqs.~\eqref{eq:amplitudeGap1} and \eqref{eq:amplitudeGap2} that 
\begin{align}
\label{eq:FormulaForGap}
    \gap(g) = \frac{2^n}{\sqrt h} (-1)^a Z_{1/2}(2,f).
\end{align}

Next, feed $f$ into the oracle $O_{\mathcal D}$ to get $Z_{1/2}(2,f)$. Finally, use Eq.~\eqref{eq:FormulaForGap} to calculate and output $\gap(g)$.

Since each step of the above reduction $T_{\mathcal B}$ takes polynomial time, the entire reduction runs in polynomial time.

\end{proof}

Since $(\mathcal B)$ is $\#\mathsf P$-hard, it follows from the above reduction that $(\mathcal D)$ is also $\#\mathsf P$-hard. Combining this results with Theorem \ref{thm:mainThmOne} gives the following theorem.
\begin{thm}
\label{thm:HardnessDegree2Aperiodic}
The following computational problem is $\#\mathsf P$-hard:

\begin{tabular}{ccp{10.3cm}}
($\mathcal C$) 
&
$\mathtt{Input}$: 
&
$(d,f)$, where $d \in \mathbb Z_{\geq 2}$ and $f: \mathbb Z^n \rightarrow \mathbb Z$ is an aperiodic degree-2 polynomial function
\\
&
$\mathtt{Output}$:
&
$Z_{1/2}(d,f) = \sum_{x \in \mathbb Z_2^n} \i^{f(x)}$.
\end{tabular}

\end{thm}

\subsection{Other incomplete Gauss sums: }

In this section, we restrict our attention to $d=2$, and consider incomplete Gauss sums of the form: 
\begin{equation}
    Z_{1/2^{k}}(2, f)=\sum_{x_1,...,x_n\in \bbZ_2}\omega^{f(x_1,...,x_n)}_{2^{k+1}}
\end{equation}
with $k\geq 2$.
For $k=2$,
the exponential sum $$Z_{1/4}=\sum_{x_1,...,x_n\in \bbZ_2}\omega^{f(x_1,...,x_n)}_{8} ,$$ 
with no requirement on the periodicity of the 
polynomial $f$,
corresponds to the gate set $\{H,T, CZ\}$, which is universal, and it can be shown that computing such sums is $\#\mathsf P$-hard.
However, for quadratic polynomial $f$ satisfying the periodicity condition, we can reduce the evaluation of $Z_{1/4}(2, f)$ to the evaluation of $Z_{1/2}(2, f')$, for some quadratic polynomial 
$f'$ satisfying the periodicity condition, which in turn can be evaluated in $\poly(n)$ time. 
More generally, for any $k\geq 2$, if $f$ is a quadratic polynomial satisfying the
periodicity condition,
the incomplete Gauss sum $Z_{1/2^k}(2, f)$ can be reduced to $Z_{1/2}(2, f')$.

\begin{lem}
Let $d=2$, and let $f=\sum_{i\leq j}\alpha_{ij} x_ix_j+\sum_i\beta_i x_i$ be a quadratic polynomial. Then $f$ satisfies the periodicity condition 
 \begin{eqnarray}
\omega^{f(x_1,...,x_n)}_{2^{k+1}} 
=\omega^{ f((x_1 \mod 2),...,(x_n \mod 2))}_{2^{k+1}},
\end{eqnarray}
if and only if  $2^{k-1}|\alpha_{ii}$, $2^{k}|\alpha_{ij}$ (i< j) and 
$2^k|\beta_i$. Thus, 
$Z_{1/2}^{k+1}(2, f)=Z_{1/2}(2, f/2^{k-1})$, where $f/2^{k-1} $ satisfies the periodicity condition
for $\omega_2=\sqrt{-1}$.

\end{lem}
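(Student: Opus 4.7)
The plan is to recognize that the periodicity condition $\omega_{2^{k+1}}^{f(x)} = \omega_{2^{k+1}}^{f(x \bmod 2)}$, required to hold for all $x \in \bbZ^n$, is literally the integer congruence $f(x) \equiv f(x \bmod 2) \pmod{2^{k+1}}$. Both directions of the equivalence can then be reduced to analyzing how $f$ changes under a single-coordinate shift $x \mapsto x + 2e_j$, which is the minimal perturbation that preserves $x \bmod 2$.

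For necessity, I would expand
\begin{equation*}
f(x + 2e_j) - f(x) = 4\alpha_{jj} x_j + 4\alpha_{jj} + 2\sum_{i \ne j}\alpha_{\{ij\}} x_i + 2\beta_j ,
\end{equation*}
where $\alpha_{\{ij\}}$ denotes $\alpha_{ij}$ or $\alpha_{ji}$ according to order. This is an affine function of $x$ that must vanish modulo $2^{k+1}$ for every integer $x$, so each coefficient must vanish modulo $2^{k+1}$ individually: varying $x_j$ gives $4\alpha_{jj}\equiv 0$, varying $x_i$ for $i\ne j$ gives $2\alpha_{\{ij\}}\equiv 0$, and what remains forces $2\beta_j\equiv 0$. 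Rewritten as divisibilities, these are exactly $2^{k-1}\mid \alpha_{jj}$, $2^k\mid \alpha_{ij}$ for $i<j$, and $2^k\mid \beta_j$.

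For sufficiency, I would write $x_i = y_i + 2z_i$ with $y_i = x_i\bmod 2 \in \{0,1\}$ and expand $f(x) - f(y)$; every resulting monomial carries a prefactor of the form $4\alpha_{ii}$, $2\alpha_{ij}$, or $2\beta_i$, and the hypothesized divisibilities make each of these a multiple of $2^{k+1}$. The closing identity then follows directly: under the same hypotheses, $g := f/2^{k-1}$ is an integer quadratic polynomial whose cross and linear coefficients are even, so $g \in \mathcal F_2^{\mathrm{p.c.}}$ (in the $\xi_2 = \i$ sense), and
\begin{equation*}
\omega_{2^{k+1}}^{f(x)} = \omega_{2^{k+1}}^{2^{k-1} g(x)} = \omega_4^{g(x)} = \xi_2^{g(x)},
\end{equation*}
so summing over $x \in \bbZ_2^n$ gives the stated identity $Z_{1/2^k}(2,f) = Z_{1/2}(2, f/2^{k-1})$.

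There is no real obstacle — the whole argument is bookkeeping in $2$-adic valuations. The only mild subtlety is in the necessity step, where one must ensure the shift-argument genuinely probes each of the $n$ diagonal conditions, the $\binom n 2$ cross conditions, and the $n$ linear conditions; this is immediate since the test vectors $x = c\, e_i$ (for $c \in \bbZ$) extract each affine coefficient independently.
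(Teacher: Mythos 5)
Your proof is correct and follows essentially the same route as the paper's: both reduce the periodicity condition to the integer congruence $f(x)\equiv f(x\bmod 2)\pmod{2^{k+1}}$ and extract the divisibility conditions $2^{k-1}\mid\alpha_{ii}$, $2^{k}\mid\alpha_{ij}$, $2^{k}\mid\beta_i$ by shifting a single coordinate by $2$ and reading off the coefficients of the resulting affine difference. In fact your write-up is slightly more complete, since you also spell out the sufficiency direction and the final identity $Z_{1/2^{k}}(2,f)=Z_{1/2}(2,f/2^{k-1})$ via $\omega_{2^{k+1}}^{2^{k-1}g(x)}=\xi_2^{g(x)}$, which the paper only asserts.
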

\begin{proof}
It is easy to verify that the quadratic polynomial $f$ satisfies the periodicity condition if 
 $2|\alpha_{ii}$, $4|\alpha_{ij}$ ($i< j$) and 
$4|\beta_i$. 

For any $i$, 
\begin{eqnarray*}
\omega^{\alpha_{ii}x^2_i+\beta_i x_i}_{2^{k+1}}
=\omega^{\alpha_{ii}(x_i+2)^2+\beta_i (x_i+2)}_{2^{k+1}}
\end{eqnarray*}
for any $x_i\in \bbZ$, which implies that $2^{k-1}|\alpha_{ii}$ and $2^k|\beta_i$.

Moreover, for any fixed $i$ and  $j$ with $i<j$, we can choose $x_k=0$ for any $k\neq i, j$ to get
\begin{eqnarray*}
\omega^{\alpha_{ii}x^2_i+\alpha_{ii}x^2_j+\alpha_{ij}x_ix_j+\beta_i x_i+\beta_jx_j}_{2^{k+1}}
=\omega^{\alpha_{ii}(x_i+2)^2+\alpha_{ii}x^2_j+\alpha_{ij}(x_i+2)x_j+\beta_i (x_i+2)+\beta_jx_j}_{2^{k+1}}
\end{eqnarray*}
for any $x_i, x_j\in \bbZ$. This implies that $2^k|\alpha_{ij}$. 
Since $i,j$ were arbitrarily chosen, it follows that all cross terms $\alpha_{ij}$ satisfy $2^k|\alpha_{ij}$.

\end{proof}

\subsection{Complexity dichotomy theorems}

In 1979, Valiant introduced the complexity class $\sharpP$ to characterize the computational complexity of
counting problems \cite{Valiant1979}, and ever since then, this has been a subject of much research.

Among the many important results arising from this research are the complexity dichotomy theorems, which have attracted considerable attention \cite{Creignou1996,Dyer2007,Bulatov2004,Goldberg2010,Bulatov2008,Dyer2009, cai2014complexity}.
These theorems state, roughly, that for certain classes of counting problems, each problem in the class is either efficiently computable or
$\sharpP$-hard. (See \cite{Cai2017book} for an overview.)

These dichotomy theorems have applications to the study of exponential sums. An example of such a theorem was provided by \cite{cai2010tractable}, which proved that computing Gauss sums $Z(d,f)$ can be performed efficiently when $\deg(f)\leq 2$ and is $\sharpP$-hard when $\deg(f) \geq 3$. Note that the polynomials considered by \cite{cai2010tractable} all satisfy the periodicity condition. Hence, if we combine these $\sharpP$-hardness results with Theorem \ref{thm:mainThmOne}, we arrive at a new dichotomy theorem: if $\deg(f)\leq 2$, then the exponential sum  $Z_{1/2}(d, f)$ is computable in polynomial time. Otherwise, if 
$\deg(f)\geq 3$, then computing $Z_{1/2}(d, f)$ is $\# \mathsf{P}$-hard.

Furthermore, for the class of aperiodic exponential sums, our results imply another new complexity dichotomy theorem: if $\deg(f)\leq 1$, then the exponential sum  $Z_{1/2}(d,f)$ is computable in polynomial time, otherwise if 
$\deg(f)\geq 2$, then computing $Z_{1/2}(d,f)$ is $\# \mathsf{P}$-hard. For a summary of these results, see Table \ref{tab:classification}.

\section{Tractable signature in Holant problem}
\label{sec:tractable}

In this section, we will apply our results about half Gauss sums to an important framework called the Holant framework, which we will now describe. Let $\mathcal{F}$ be a set of functions, where each element $f\in \mathcal{F}: \bbZ^{ n}_d \to \complex$. 
A signature grid 
$\Omega=(G, \mathcal{F})$ is a tuple, 
where $G=(V, E)$ is a hypergraph
and each $v\in \mathcal{F}$ is assigned a function 
$f_v\in F$ with arity equal to the 
number of hyperedges incident to it. 
A $\bbZ_d$ assignment $\sigma$
for every $e\in E$ gives an evaluation
$\prod_vf_v(\sigma|_{E(v)})$, where $E(v)$ 
denotes the edges incident to $v$.
Given an input instance $\Omega$, we are interested in computing
\begin{eqnarray}
\mathrm{Holant}_{\Omega}=\sum_{\sigma:E\to \bbZ_d}
\prod_v f_v(\sigma|_{E(v)}).
\end{eqnarray}

Affine signatures over $\mathbb Z_2$ and $\mathbb Z_3$ were defined in \cite{cai2014complexity, Williams2015}. In this section, we give a definition of affine signtures over $\mathbb Z_d$, for $d\geq 2$.

\begin{enumerate}
\item {\bf Affine signature over $\bbZ_d$}: Let $f$ be a signature of arity $n$ with inputs $x_1,..., x_n$ over the domain $\bbZ_d$, 
then $f$ is affine if it has the following form 
\begin{eqnarray}
\lambda\chi_{A\vec{x}=0}\xi^{g(x_1,..., x_n)}_d
\end{eqnarray}
where $\lambda\in \complex$, $\xi_d$ is a chosen square root of $\omega_d=\exp(2\pi i/d)$  such that 
$\xi^{d^2}_d=1$, $A$ is a matrix over $\bbZ_d$, $\chi$ is a $0$--$1$ indicator function such that 
$\chi_{A\vec{x}=0}=1$ if and only if $A\vec{x}=0$, and $g(x_1,,,.x_n)\in \bbZ[x_1,...,x_n]$ is a quadratic polynomial with 
even cross and linear terms. Let $\mathcal{A}$ to be the set of all affine signatures. It is straightforward to check that $\mathcal{A}$ is closed under multiplication.
\item {\bf Degenerate function on $n$ variables} Let 
\begin{eqnarray}
\mathcal{D}
=\set{\otimes_i[f_i(0), f_i(1),...,f_i(d-1)]| f_i(j)\in \complex}
\end{eqnarray}
 be the set of functions that can be expressed as the tensor product of 
unary function. 
\item {\bf The set $\mathcal{P}$}: Let $\mathcal{P}$ be the set of functions that can be written as the 
composition of unary functions and the binary equality relation $=_2$, where $=_2\!\!(i,j)$ is equal to $1$ if $i=j$ and $0$ otherwise. 
\end{enumerate}

\begin{thm}
\label{thm:holant}
Given a class of functions $\mathcal{F}$, if $\mathcal{F}\subseteq \mathcal{A}$ or $\mathcal{F}\subseteq \mathcal{P}$, then
$\mathrm{Holant}(\mathcal{F}$) is computable in polynomial time. 
\end{thm}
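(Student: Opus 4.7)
The plan is to handle the two cases separately. For $\mathcal{F}\subseteq\mathcal{P}$, I would use a purely combinatorial argument: the binary equality factors $=_2$ partition the edges $E$ of the signature grid into equivalence classes, and on each class every variable is forced to take a single common value (otherwise the product vanishes). Computing this partition takes polynomial time (e.g.\ by union--find), and the Holant then factorizes as
\begin{equation*}
\mathrm{Holant}_\Omega \;=\; \prod_{\text{classes } C}\,\Bigl(\sum_{x \in \bbZ_d}\,\prod_{u \in U_C} u(x)\Bigr),
\end{equation*}
where $U_C$ collects all the unary factors incident to class $C$. Each inner sum costs $O(d\,|U_C|)$ to evaluate, giving an overall polynomial runtime.

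For $\mathcal{F}\subseteq\mathcal{A}$, my plan is to reduce the problem to a single evaluation of a multivariate half Gauss sum whose exponent satisfies the periodicity condition, and then appeal to Theorem \ref{thm:mainThmOne}. Writing each $f_v = \lambda_v\,\chi_{A_v \vec x_v = 0}\,\xi_d^{g_v(\vec x_v)}$ with $g_v \in \mathcal{F}_2^{\mathrm{p.c.}}$, I would first stack the hard constraints $\{A_v \vec x_v = 0\}_v$ into a single global homogeneous system $M \vec x_E = 0$ over $\bbZ_d$. Using Smith normal form (which can be computed in polynomial time over $\bbZ_d$ for arbitrary $d$), one obtains a parametrization $\vec x_E = L\vec y$ of the solution module with $\vec y \in \bbZ_d^r$ and $L$ an integer matrix. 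The Holant then takes the form
\begin{equation*}
\mathrm{Holant}_\Omega \;=\; \Bigl(\prod_v \lambda_v\Bigr)\sum_{\vec y \in \bbZ_d^r}\xi_d^{G(\vec y)},\qquad G(\vec y)\;:=\;\sum_v g_v\!\bigl((L\vec y)|_{E(v)}\bigr).
\end{equation*}

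The main technical step, and what I expect to be the main obstacle, will be verifying that $G \in \mathcal{F}_2^{\mathrm{p.c.}}$ so that Theorem \ref{thm:mainThmOne} applies. Quadraticity is immediate since each $g_v$ is quadratic and the substitution $\vec x = L\vec y$ is linear. For the periodicity condition, I would expand $g_v(L\vec y)$ term by term: a diagonal term $\alpha\,x_i^2$ contributes $\alpha\bigl(\sum_k L_{ik} y_k\bigr)^2$, whose off-diagonal $y_k y_l$ coefficients $2\alpha L_{ik} L_{il}$ and cross-linear contributions carry an explicit factor of $2$ and are therefore even, while only the new diagonal coefficients $\alpha L_{ik}^2$ may be odd, which is permitted. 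A cross term $\alpha\,x_i x_j$ with $\alpha$ even contributes substituted coefficients that remain multiples of $\alpha$, hence even. A linear term $\beta\,x_i$ with $\beta$ even likewise produces even coefficients after substitution. Summing over $v$ preserves these parities, so $G \in \mathcal{F}_2^{\mathrm{p.c.}}$, and Theorem \ref{thm:mainThmOne} then evaluates $Z_{1/2}(d, G)$ in polynomial time, completing the argument.
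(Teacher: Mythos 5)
Your $\mathcal{P}$ case is essentially the paper's own argument (connected components forced to a common value, at most $d$ choices per component), so nothing to add there. For the $\mathcal{A}$ case you take a genuinely different route: the paper never eliminates the linear constraints algebraically --- it rewrites $Holant(\mathcal{F})$ as an inner product $(\otimes_e\bra{GHZ_e})(\otimes_v\ket{f_v})$, absorbs each indicator $\chi_{A_v\vec x=0}$ into ancilla qudits acted on by $CX$ and $X$ gates, observes that the remaining phase part of each $\ket{f_v}$ is a stabilizer state, and invokes the qudit Gottesman--Knill theorem (itself proved here via half Gauss sums, Theorem \ref{thm:GKtheorem} and Corollary \ref{cor:GKstrong}). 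Your plan instead reduces directly to Theorem \ref{thm:mainThmOne} by solving the stacked system and substituting; your parity bookkeeping showing $G\in\mathcal F_2^{\mathrm{p.c.}}$ is correct (and still works if the constraints carry affine constants, since a particular-solution shift only adds linear terms with an explicit factor of $2$ from the diagonal pieces). If it goes through, your route is more self-contained and purely number-theoretic; the paper's route deliberately avoids linear algebra over $\bbZ_d$, which it flags as the obstruction (``Gaussian elimination may not work in general $\bbZ_d$'').

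That flagged obstruction is exactly where your write-up has a real inaccuracy. For composite $d$ the solution set of $M\vec x\equiv 0\ (\mod d)$ is not a free $\bbZ_d$-module, so there is in general no parametrization $\vec x=L\vec y$ with $\vec y$ ranging over $\bbZ_d^r$: from the Smith normal form $M=U^{-1}DV^{-1}$ with $D=\diag(d_1,d_2,\ldots)$, the solutions are $\vec x=V\vec z$ with $z_i$ ranging over the subgroup of $\bbZ_d$ of order $g_i=\gcd(d_i,d)$, i.e.\ a product $\prod_i\bbZ_{g_i}$ (e.g.\ $d=4$, constraint $2x=0$: the solution set is $\{0,2\}\cong\bbZ_2$). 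Consequently your displayed identity, with $\vec y$ summed over $\bbZ_d^r$ and no prefactor, overcounts each solution $\prod_i(d/g_i)$ times and is false as stated. The fix is routine but must be said: write $z_i=(d/g_i)t_i$, note that the map $\vec t\mapsto L\vec t \ (\mod d)$ with $L=V\,\diag(d/g_i)$ is uniformly $\prod_i(d/g_i)$-to-one onto the solution set, and that the summand depends only on $L\vec t\ (\mod d)$ precisely because each $g_v$ satisfies the periodicity condition; then
\begin{equation*}
Holant_{\Omega}=\Bigl(\prod_v\lambda_v\Bigr)\Bigl(\prod_i \tfrac{g_i}{d}\Bigr)\,Z_{1/2}(d,G),
\end{equation*}
with the prefactor computable in polynomial time (and infeasible systems detected by the same computation, giving $0$). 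With that correction, and the observation that Smith normal form over $\bbZ$ is polynomial-time, your argument does establish the $\mathcal{A}$ case of Theorem \ref{thm:holant} by a different and arguably more elementary path than the paper's stabilizer-state argument.
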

\begin{proof}
(1) If $\mathcal{F}\subseteq \mathcal{P}$, then following 
\cite{cai2014complexity}, we can group the variables into connected components if these variables are connected by the binary equality relation $=_2$. In any connected component, let us start with a variable that takes a value in $\bbZ_d$, 
and follow any edges labeled by the binary equality relation. There is at most one
extension of this assignment, i.e., each variable in this connected component must take the same value as the value that was taken at the beginning.
Then we can easily compute the value of the Holant by simply multiplying all the values. There are at most $d$ values, as we have $d$ choices at the starting edge.

(2) If $\mathcal{F}\subseteq \mathcal{A}$, then the method in \cite{cai2014complexity} may not work, as Gaussian elimination may not be applicable for general $\bbZ_d$. 
To get around this, we consider the inner product representation of the 
Holant problem $\mathrm{Holant}(\mathcal{F})$, which can be 
written as
\begin{eqnarray}
\mathrm{Holant}(\mathcal{F})
=(\otimes_e\bra{\mathrm{GHZ}_e})(\otimes_v\ket{f_v}),
\end{eqnarray}
where 
$\ket{\mathrm{GHZ}_e}$ denotes the GHZ state on $(\complex_d)^{\otimes |e|}$, where $|e|$ denotes the number of vertices incident to 
the edge $e$. For example, if $|e|\ =\{1,2,3\}$, then 
$\ket{\mathrm{GHZ}_e}$ is $\ket{+}=\sum^{d-1}_{i=0}\ket{i}$, 
$\ket{\mathrm{Bell}}=\sum^{d-1}_{i=0}\ket{ii}$ and 
$\ket{\mathrm{GHZ}}=\sum^{d-1}_{i=0}\ket{iii}$, respectively.

Since $f_v\in \mathcal{A}$,
\begin{eqnarray}
\ket{f_v}=\sum_{x_1,...,x_k\in \bbZ_d}\chi_{A_v\vec{x}=0}
\xi^{g_v(x_1,\ldots,x_k)}_d\ket{x_1,\ldots,x_k} ,
\end{eqnarray}
where $g_v$ is a quadratic polynomial with even cross and linear terms, and $k$ denotes the arity of $f_v$. If we omit the term $\chi_{A_v\vec{x}=0}$ in the above expression, then the remaining expression represents 
a stabilizer state, which we denote as $\ket{\mathrm{STAB}}_v$.
Now consider  $\sum^{k}_{i=1}A_{1,i}x_i+A_{1,k+1}=0\ (\mod d)$ that is given by the first line of $A\vec{x}=0$. We can add an ancilla qudit 
with $\bra{0}\prod_{j} (CX)^{A_{1j}} X^{A_{1, k+1}}\ket{0}$ with 
control qudit being $j=1,\ldots, k$. 
Then, $\ket{f_v}$ can be written as
\begin{eqnarray}
\ket{f_v}
=\bra{0}^{\otimes m_v} \prod_{i, j} (CX)^{A_{ij}} X^{A_{i, k+1}}\ket{\mathrm{STAB}}_v\ket{0}^{\otimes m_v},
\end{eqnarray}
where $m_v$ is the number of rows in $A_v$. 
Therefore,
\begin{eqnarray*}
\mathrm{Holant}(\mathcal{F})=(\otimes_e\bra{\mathrm{GHZ}_e})
(\otimes_v \bra{0}^{\otimes m_v})
(\otimes_v\prod_{i, j} (CX)^{A_{ij}} X^{A_{i, k+1}}\ket{\mathrm{STAB}}_v\ket{0}^{\otimes m_v}),
\end{eqnarray*}
which is just a product of two stabilizer states. It can be computed 
in polynomial time by the Gottesman-Knill theorem \cite{gottesman1997heisenberg}.

\end{proof}

While Theorem \ref{thm:holant} addresses the question about which functions lead to tractable Holant problems, we leave open the question about which functions lead to intractable Holant problems: for which classes of functions $\mathcal{F}$ does it hold that (i) $\mathcal F$ is neither in $\mathcal{P} $ nor $\mathcal{A}$ and (ii) $\mathrm{Holant}(\mathcal{F})$ is $\#\mathsf P$-hard?

\section{Concluding remarks}
In this paper, we found a larger (compared to previous results) class of quadratic exponential sums whose evaluation we proved to be tractable. In particular,
we studied the periodic, quadratic, multivariate half
Gauss sums, and gave an efficient algorithm to evaluate these incomplete Gauss sums.
We showed that without either the periodicity or quadraticity condition, these exponential 
sums become intractable under plausible complexity assumptions.
These results
demonstrate the importance of a periodicity condition, which
has not been explored in previous works.
Moreover, we show that these tractable
exponential sums can be used to express the amplitudes of qudit Clifford circuits,
thereby providing an alternative proof of the Gottesman-Knill theorem for qudit Clifford
circuits.
Last but not least, we provided a tractable affine signature in arbitrary dimensions in the Holant framework.

\section*{Acknowledgments}
We thank Arthur Jaffe for useful discussions. K.B. thanks the Templeton Religion Trust for partial support of this research under grant TRT0159, the ARO Grant W911NF-19-1-0302 and the ARO MURI Grant W911NF-20-1-0082, and also thanks Zhejiang University for the support of an Academic Award for
Outstanding Doctoral Candidates.
D.E.K.~was funded by EPiQC, an NSF Expedition
in Computing, under Grant CCF-1729369.

\begin{appendix}

\section{Exponential sum terminology}
\label{sec:terminology}

In this appendix, we summarize some of the terminology used in the main text. An \textit{exponential sum} is a sum of the form
\begin{equation}
    \sum_{x\in A} e^{f(x)},
\end{equation}
where $A \subseteq V$ is a finite set, $V$ is an arbitrary set, and $f:V\rightarrow \bbC$ is a complex-valued function.

The exponential sums used in this paper are all \textit{incomplete Gauss sums}\footnote{Here, we generalized the definition of ``incomplete Gauss sums'' used in \cite{lehmer1976incomplete,evans2003incomplete} to the multivariate case.}, which are sums of the form
\begin{equation}
Z_I(d,b,f) = \sum_{x_1,\ldots,x_n\in \bbZ_d}
\omega^{f(x_1,\ldots,x_n)}_b
\end{equation}
where $d,n,b \in \bbZ^+$ satisfy $d \leq b$ and $f$ is a polynomial with integer coefficients.

Two special cases of incomplete Gauss sums are the \textit{Gauss sum}, defined as
\begin{equation}
Z(d, f)= Z_I(d,d,f) = \sum_{x_1,\ldots,x_n\in \bbZ_d}
\omega^{f(x_1,\ldots,x_n)}_d.
\end{equation}
and the \textit{half Gauss sum}, defined as
\begin{equation}
Z_{1/2}(d, f) = \sum_{x_1,\ldots,x_n\in \bbZ_d}
\xi^{f(x_1,\ldots,x_n)}_d.
\end{equation}

With this terminology, note that Gauss sums are a special case of half Gauss sums, which are in turn a special case of incomplete Gauss sums.

When $f$ is quadratic, $Z(d,f)$ and $Z_{1/2}(d,f)$ reduce to the (multivariate) quadratic Gauss sum \eqref{eq:Zdf} and (multivariate) quadratic half Gauss sum \eqref{eq:Z12df} respectively. When $n=1$ and $f$ is a homogeneous quadratic polynomial (i.e. $f(x) = ax^2$), the sums $Z(d,f)$ and $Z_{1/2}(d,f)$ reduce to the univariate quadratic homogeneous Gauss sum \eqref{eq:univariateGauss} (which is usually just referred to as a Gauss sum \cite{Lang1970Gsum}) and univariate quadratic homogeneous half Gauss sum \eqref{eq:univariateHalfGauss} respectively. Note that univariate quadratic Gauss sums are also called Weil sums \cite{lidl1997finite}.

\section{Properties of  Gauss sum}\label{append:GS}
In this section, we give some basic facts about the Gauss sum $G(\cdot, \cdot)$\cite{Lang1970Gsum}.
Given two nonzero integers $a, d$ with $d>0$ and $\gcd(a, d)=1$,
\begin{eqnarray*}
G(a, d)
=\sum_{x\in \mathbb{Z}_d}
\omega^{ax^2}_d.
\end{eqnarray*}
The Gauss sum satisfies the following properties:

(1) If $d$ is odd, then 
\begin{eqnarray}
G(a,d)=\left(\frac{a}{d}\right)G(1, d),
\end{eqnarray}
 where $\left(\frac{a}{d}\right)$ is  the Jacobi symbol. 
Moreover, 

\begin{equation}
G(1,d)=
             \begin{cases}
             \sqrt{d}, &  d\equiv 1\ (\mod 4)   \\
              \i\sqrt{d}, &  d\equiv 3\ (\mod 4).  
             \end{cases}
\end{equation}

(2) If $d=2^k$, then for  $k\geq 4$, 
\begin{eqnarray}
G(a, 2^k)=2G(a, 2^{k-1}).
\end{eqnarray}

(3) If $d=bc$ with $\gcd(b,c)=1$, then 
\begin{eqnarray}
G(a, bc)=G(ab,c)G(ac, b).
\end{eqnarray}

\section{\texorpdfstring{Half Gauss sum for $\xi_d=-\omega_{2d}$ with even $d$}{Half Gauss sum for xid=-omega2d with even d}}
\label{appen:xi_ev}
In the main text, we chose $\xi_d=\omega_{2d}$ for all even numbers $d$.
Note that in the case when $d$ is even, $\xi_d$ can be chosen to
be $\pm \omega_{2d}$. Here, we consider the case 
$\xi_d=-\omega_{2d}$ for all even numbers $d$. To distinguish these two cases, 
we define $G_{1/2}(a, d)_+$ for the case when $\xi_d=\omega_{2d}$ and 
$G_{1/2}(a, d)_-$ for the case when $\xi_d=-\omega_{2d}$ for even $d$. Thus, we have the following two properties for 
$G_{1/2}(a,d)_{-}$.

\begin{lem}
 If $d$ is even, then 
\begin{eqnarray}
G_{1/2}(a, d)_-=G_{1/2}(a(N_1+bN_2),b)_-G_{1/2}(aN_2, c),
\end{eqnarray}
where $d=bc$, $\gcd(b,c)=1$,  $2|b$ and integers $N_1$ and $N_2$ satisfy $N_1c+N_2b=1$.

\end{lem}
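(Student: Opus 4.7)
The plan is to mirror the proof of Proposition~\ref{thm:HG1}(2) with the alternative sign choice $\xi_d=-\omega_{2d}$ for even $d$, reusing the Chinese-remainder substitution $x=N_2bz+N_1cy$ verbatim and tracking only the extra $(-1)$-factors that the sign flip introduces. First I would record the parity facts that will drive everything: $\gcd(a,d)=1$ together with $d$ even forces $a$ to be odd, while $2\mid b$ together with $N_1c+N_2b=1$ forces $N_1$ (and hence $N_1+bN_2$, and hence also $a(N_1+bN_2)$) to be odd. A direct consequence is $x\equiv N_1cy\equiv y\pmod 2$.

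Next, I would open up
$$G_{1/2}(a,d)_-=\sum_{x\in\bbZ_d}(-\omega_{2d})^{ax^2}=\sum_{x\in\bbZ_d}(-1)^{x}\,\omega_{2d}^{ax^2},$$
using $ax^2\equiv x\pmod 2$, and then plug in the core identity already established inside the proof of Proposition~\ref{thm:HG1}(2) (which was derived at the level of $\omega_{2d}$, with no reference to the sign convention for $\xi_b$),
$$\omega_{2d}^{ax^2}=\omega_{2b}^{a(N_1+bN_2)y^2}\,\xi_c^{aN_2z^2}.$$
The parity fact $x\equiv y\pmod 2$ lets me replace $(-1)^x$ by $(-1)^y$, producing a triple product $(-1)^y\,\omega_{2b}^{a(N_1+bN_2)y^2}\,\xi_c^{aN_2z^2}$.

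Finally, since $a(N_1+bN_2)$ is odd we have $(-1)^y=(-1)^{a(N_1+bN_2)y^{2}}$, so
$$(-1)^y\,\omega_{2b}^{a(N_1+bN_2)y^2}=(-\omega_{2b})^{a(N_1+bN_2)y^2},$$
which is precisely $\xi_b^{a(N_1+bN_2)y^2}$ under the convention $\xi_b=-\omega_{2b}$. Splitting the resulting double sum over $(y,z)\in\bbZ_b\times\bbZ_c$ then yields the claimed factorization $G_{1/2}(a(N_1+bN_2),b)_-\,G_{1/2}(aN_2,c)$; the coprimality conditions $\gcd(a(N_1+bN_2),b)=1$ and $\gcd(aN_2,c)=1$ required for these two sums to be well-defined were already verified in the proof of Proposition~\ref{thm:HG1}(2) and carry over without change. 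The only real obstacle, modest as it is, is this parity bookkeeping: both sign flips---from $(-1)^x$ to $(-1)^y$ via the CRT step, and from $(-1)^y$ back into the ``$-$'' base at $\xi_b$---depend on the oddness of $N_1$ and of $a(N_1+bN_2)$, so stating these parity facts cleanly at the outset is what keeps the argument from tangling with sign ambiguities.
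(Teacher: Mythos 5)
Your proposal is correct and follows essentially the same route as the paper's own proof: the paper likewise reuses the Chinese-remainder computation from Proposition \ref{thm:HG1}(2) and tracks the extra sign, writing $\xi_d^{ax^2}=(-1)^{ax^2}\omega_{2b}^{aN_1x^2}\omega_{2c}^{aN_2x^2}$ and absorbing the $(-1)$ factors into $(-\omega_{2b})^{a(N_1+bN_2)y^2}\xi_c^{aN_2z^2}$ using the same parity facts ($a$, $N_1$, and $N_1+bN_2$ odd, $x\equiv y \pmod 2$) that you state explicitly. The only difference is cosmetic bookkeeping of when the sign factors are merged, so there is nothing to add.
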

\begin{proof}
Following the approach in the proof of Proposition \ref{thm:HG1}, we obtain
 \begin{eqnarray*}
 \xi^{ax^2}_d
 =(-1)^{ax^2}\omega^{aN_1x^2}_{2b}
\omega^{aN_2x^2}_{2c}
&=&(-1)^{ay^2}
\omega^{aN_1y^2}_{2b}
\xi_{c}^{aN_2z^2}
(-1)^{aN_2y^2}\\
&=&(-\omega_{2b})^{a(N_1+bN_2)y^2}
\xi_{c}^{aN_2z^2}\\
&=&\xi^{a(N_1+bN_2)y^2}_{b}
\xi_{c}^{aN_2z^2},
\end{eqnarray*}
which completes the proof of the lemma.
\end{proof}

\begin{lem}
 If $m\geq 3$, then 
 \begin{eqnarray}
 G_{1/2}(a, 2^m)_-
 = 2G_{1/2}(a, 2^{m-2})_+.
 \end{eqnarray}
\end{lem}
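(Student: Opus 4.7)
My plan is to mimic the proof of the analogous identity $G_{1/2}(a,2^m)_+=2G_{1/2}(a,2^{m-2})_+$ from the main text, but keep careful track of the extra sign factor coming from $\xi_{2^m}=-\omega_{2^{m+1}}$. The key initial observation is that since $\gcd(a,2^m)=1$, $a$ is odd, and since $x^2\equiv x\pmod 2$, we have
\begin{equation*}
\xi_{2^m}^{ax^2}=(-1)^{ax^2}\omega_{2^{m+1}}^{ax^2}=(-1)^x\omega_{2^{m+1}}^{ax^2}.
\end{equation*}
So I would start by rewriting
\begin{equation*}
G_{1/2}(a,2^m)_-=\sum_{x\in[2^m]}(-1)^x\omega_{2^{m+1}}^{ax^2}.
\end{equation*}

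Next I would partition the sum $x\in[2^m]$ into pairs $\{y,\,y+2^{m-1}\}$ with $y\in[2^{m-1}]$, exactly as in the earlier proposition. Expanding,
\begin{equation*}
(-1)^{y+2^{m-1}}\omega_{2^{m+1}}^{a(y+2^{m-1})^2}=(-1)^y\omega_{2^{m+1}}^{ay^2}\cdot\omega_{2^{m+1}}^{a\cdot 2^m y+a\cdot 2^{2m-2}},
\end{equation*}
using that $2^{m-1}$ is even for $m\geq 3$. The factor $\omega_{2^{m+1}}^{a\cdot 2^m y}=(-1)^{ay}=(-1)^y$ since $a$ is odd, and $\omega_{2^{m+1}}^{a\cdot 2^{2m-2}}=1$ because $2m-2\geq m+1$ exactly when $m\geq 3$. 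Thus the pair contributes
\begin{equation*}
(-1)^y\omega_{2^{m+1}}^{ay^2}\bigl[1+(-1)^y\bigr].
\end{equation*}

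Only even $y$ survive, so writing $y=2z$ with $z\in[2^{m-2}]$ gives
\begin{equation*}
G_{1/2}(a,2^m)_-=\sum_{z\in[2^{m-2}]}2\,\omega_{2^{m+1}}^{4az^2}=2\sum_{z\in[2^{m-2}]}\omega_{2^{m-1}}^{az^2}=2G_{1/2}(a,2^{m-2})_+,
\end{equation*}
where the last equality uses the $+$ convention $\xi_{2^{m-2}}=\omega_{2^{m-1}}$ from the main text. The calculation requires no new idea beyond the earlier proposition; the only subtlety, and the only place one could go wrong, is verifying that the stray $(-1)^x$ prefactor combines cleanly with the $(-1)^{ay}$ produced by the pairing to yield the familiar $[1+(-1)^y]$ cancellation, rather than obstructing it. The switch from $\xi_-$ to $\xi_+$ on the right-hand side is forced by the rescaling $\omega_{2^{m+1}}^{4az^2}=\omega_{2^{m-1}}^{az^2}$, which removes the factor of $2$ in the base of the root of unity and eliminates the sign twist.
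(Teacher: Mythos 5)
Your proof is correct and follows essentially the same route as the paper's: split the sum over $x\in[2^m]$ into pairs $\{y,\,y+2^{m-1}\}$, observe that the relative phase is $(-1)^y$ (using $m\geq 3$ to kill $\omega_{2^{m+1}}^{a2^{2m-2}}$), keep only even $y=2z$, and rescale $\omega_{2^{m+1}}^{4az^2}=\omega_{2^{m-1}}^{az^2}$ to land on the $+$ convention. The only cosmetic difference is that you absorb the sign as $(-1)^{ax^2}=(-1)^x$ at the outset, whereas the paper carries $(-\omega_{2^{m+1}})^{ax^2}$ through the pairing and simplifies afterwards; the computations are otherwise identical.
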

\begin{proof}
For $m\geq 3$,
\begin{eqnarray*}
 G_{1/2}(a, 2^{m})_-&=&
 \sum_{x\in [2^m]}
 (-\omega_{2^{m+1}})^{ax^2}\\
& =&\sum_{x\in [2^{m-1}]}
 \left[(-\omega_{2^{m+1}})^{ax^2} + (-\omega_{2^{m+1}})^{a(x+2^{m-1})^2} \right]\\
 &=&\sum_{x\in [2^{m-1}]}(-\omega_{2^{m+1}})^{ax^2} 
 \left[1+ (-\omega_{2^{m+1}})^{a2^mx+a2^{2m-2}}\right]\\
 &=&\sum_{x\in [2^{m-1}]}(-1)^{ax^2}\omega^{ax^2}_{2^{m+1}} 
 [1+(-1)^{x}]\\
 &=& \sum_{y\in [2^{m-2}]}
 \omega^{a(2y)^2}_{2^{m+1}} [1+(-1)^{2y}]\\
&=&2 \sum_{y\in [2^{m-2}]}
 \omega^{4ay^2}_{2^{m+1}} 
 =2 \sum_{y\in [2^{m-2}]}
 \omega^{ay^2}_{2^{m-1}} \\
 &=&2G_{1/2}(a, 2^{m-2})_+.
 \end{eqnarray*}
\end{proof}

\section{Relationship between half Gauss sums and zeros of a polynomial}
\label{app:zeros}

In this appendix, we explore the relationship between half Gauss sums $Z_{1/2}(d,f)$ and the number of zeros of functions of the form $f(x)-k \ (\mod d)$ or $(\mod 2d)$. We start with the following theorem.

\begin{thm}
Let $f:\mathbb Z_d^n \rightarrow \mathbb Z$.
\begin{enumerate}
    \item If $d$ is even, then
    \begin{align}
    \label{eq:FourierTransformEven}
        \left|\left\{x \in \mathbb{Z}_{d}^{n}: f(x)=j \bmod 2 d\right\}\right|=\frac{1}{2 d} \sum_{k=0}^{2 d-1} \xi_{d}^{-kj} Z_{1/2}(d, k f).
    \end{align}
    \item If $d$ is odd, then
    \begin{align}
    \label{eq:FourierTransformOdd}
        \left|\left\{x \in \mathbb{Z}_{d}^{n}: f(x)=j \bmod d\right\}\right|=\frac{1}{ d} \sum_{k=0}^{d-1} \xi_{d}^{-kj} Z_{1/2}(d, k f).
    \end{align}    
\end{enumerate}
\end{thm}
\begin{proof} \hfill
\begin{enumerate} 
    \item If $d$ is even, then $\xi_d=\omega_{2d}$ and $\xi^{2d}_d=1$. Hence,
    \begin{align}
    Z_{1/2}(d, k f)
    &=
    \sum_{x\in \mathbb Z_d^n} \xi_d^{k f(x)} \nonumber\\
    &=\sum^{2d-1}_{j=0}
    \xi^{kj}_d\ \left|\left\{x \in \mathbb{Z}_{d}^{n}: f(x)=j \bmod 2 d\right\}\right| .
    \end{align}
    By taking the inverse Fourier transform, we obtain Eq.~\eqref{eq:FourierTransformEven}.
    
    \item If $d$ is odd, then $\xi^d_d=1$.
    Hence,
    \begin{align}
    Z_{1/2}(d, kf)
    &=
    \sum_{x\in \mathbb Z_d^n} \xi_d^{k f(x)} \nonumber\\
    &=\sum^{d-1}_{j=0}
    \xi^{kj}_d\ \left|\left\{x \in \mathbb{Z}_{d}^{n}: f(x)=j \bmod  d\right\}\right| .
    \end{align}
    By taking the inverse Fourier transform, we obtain Eq.~\eqref{eq:FourierTransformOdd}.
\end{enumerate}
\end{proof}

This allows us to write the number of zeros of a function $f:\mathbb Z_d^n \rightarrow \mathbb Z \ \bmod d$
in terms of half Gauss sums:
\begin{thm}
\begin{align}
    \label{eq:FourierTransform}
        \left|\left\{x \in \mathbb{Z}_{d}^{n}: f(x)=0 \bmod  d\right\}\right|=\frac{1}{d} \sum_{l=0}^{d-1} Z_{1/2}(d, s_d l f) ,
    \end{align}
    where $s_d = 2$ if $d$ is even and $1$ if $d$ is odd.
\end{thm}
\begin{proof}
When $d$ is odd, setting $j=0$ in Eq.~\eqref{eq:FourierTransformOdd} gives Eq.~\eqref{eq:FourierTransform}.

Next, let $d$ be even. Then,
\begin{align}
    \left|\left\{x \in \mathbb{Z}_{d}^{n}: f(x)=0 \bmod  d\right\}\right| 
    &=
    \left|\left\{x \in \mathbb{Z}_{d}^{n}: f(x)=0 \bmod  2d\right\}\right| \nonumber\\
    &\quad +
    \left|\left\{x \in \mathbb{Z}_{d}^{n}: f(x)=d \bmod  2d\right\}\right|  \nonumber\\
    &= \frac{1}{2 d} \sum_{k=0}^{2 d-1} Z_{1/2}(d, k f)+\frac{1}{2 d} \sum_{k=0}^{2 d-1} \xi_{d}^{-k d} Z_{1/2}(d, kf) \nonumber \\
    &= \frac{1}{2 d} \sum_{k=0}^{2d-1}\left(1+(-1)^{k}\right) Z_{1/2}(d, k f)\nonumber\\
    &= \frac{1}{d} \sum_{l=0}^{d-1} Z_{1/2}(d, 2 l f),
\end{align}
where we used $\xi^d_d = -1$ in the third line.

\end{proof}

\section{Characterization of periodic polynomials of degree \texorpdfstring{$\leq 3$}{3} for \texorpdfstring{$d=2$}{d=2}}
\label{sec:periodicPoly3d2}

In this appendix, we give a characterization of polynomials with degree $\leq 3$ that satisfy the periodicity condition for $d=2$. We will use the following notation: let $\modtwo(x)$ be the unique integer $y\in \mathbb Z_2$ for which $x\equiv y \bmod 2$.

We start by proving the following identity.
\begin{lem}
\label{lem:identitymod}
Let $a,b,c,x,y \in \mathbb Z$. If $a$, $b$ and $c$ have the same parity (i.e. if $a$, $b$ and $c$ are either all even or all odd), then
\begin{align}
\label{eq:identitymod}
a x^{2} y + b x y^{2}+c x y &\equiv a \  \modtwo(x^{2} y)+b \ \modtwo(x y^{2})+c\ \modtwo(x y) \pmod 4 \\
&= (a+b+c) \modtwo(xy) \pmod 4 .
\end{align}
\end{lem}
\begin{proof}
Write $x=2q+u$ and $y = 2r+v$, where $q,r \in \mathbb Z$ and $u,v \in \mathbb Z_2$.

Then the RHS of Eq.~\eqref{eq:identitymod} is
\begin{align*}
\mathrm{RHS} &=a \ \modtwo(x^{2} y)+b \ \modtwo\left(x y^{2}\right)+c \ \modtwo(x y) \\
&=a \ \modtwo\left[(2 q+u)^{2}(2 r+v)\right]+b \ \modtwo \left[(2 q+u)(2 r+v)^{2}\right] \\
&\qquad +c \ \modtwo [(2 q+u)(2 r+v)] \\
&=a \ \modtwo\left(u^{2} v\right)+b \ \modtwo\left(u v^{2}\right)+c \ \modtwo(u v) \\
&=a u v+b u v+c u v \\
&=(a+b+c) u v \\
&= (a+b+c) \modtwo(x)\modtwo(y) \\
&= (a+b+c) \modtwo(xy).
\end{align*}

On the other hand, the LHS of Eq.~\eqref{eq:identitymod} is
\begin{align*}
\mathrm{LHS}
&= a x^{2} y+b x y^{2}+c x y
\\
&=
a(2 q+u)^{2}(2 r+v)+b(2 q+u)(2 r+v)^{2}+c(2 q+u)(2 r+v) \\
&= a\left(4 q^{2}+4 q u+u^{2}\right)(2 r+v)+b(2 q+u)\left(4 r^{2}+4 r v+v^{2}\right) \nonumber\\&\quad
+c(4 q r+2 q v+2 u r+u v)
\\
&\equiv au^{2}(2 r+v)+b(2 q+u) v^{2}+c(2 q v+2 u r+u v) \quad \bmod 4 \\
&=2 a u^{2} r+a u^{2} v+2 b q v^{2}+u b v^{2}+2 c q v+2 c u r+c u v\\
&=2 a u r+a u v+2 b q v+u b v+2 c q v+2 c u r+c u v \quad \because u, v \in\{0,1\}
\\
&= 2(a+c) u r+2(b+c) q v+(a+b+c) u v \\
&\equiv (a+b+c)uv \ \mod 4 ,
\end{align*}
where the last equivalence holds because $a,b,c$ have the same parity, i.e. $a+c$ and $b+c$ are even.

\end{proof}

\begin{thm}
Let $n\in \mathbb Z^+$ and let
\begin{align*}
g\left(x_{1}, \ldots, x_{n}\right)=\sum_{1 \leqslant i_1 \leqslant i_2 \leqslant i_3 \leqslant n} a_{i_1 i_2 i_3} x_{i_1} x_{i_{2}} x_{i_{3}}+\sum_{1 \leqslant i_{1} \leqslant i_2 \leqslant n} a_{i_1 i_{2}} x_{i_1} x_{i_{2}}+\sum_{i=1}^{n} a_{i} x_{i} + a
\end{align*}
be a polynomial of degree $\leq 3$ with coefficients that satisfy
$a_{ijk}$, $a_{ij}$, $a_i$ and $a \in \mathbb Z$ for all $i\leqslant j\leqslant k \in \{1,\ldots,n\}$. Then, $g$ satisfies the periodicity condition for $d=2$ if and only if for all distinct $i, j, k \in \{1,\ldots,n\}$,
\begin{enumerate}
    \item[(i)] $a_i$, $a_{iii}$ and $a_{ijk}$ are even,
    \item[(ii)] $a_{ij}$, $a_{ijj}$ and $a_{iij}$ have the same parity.
\end{enumerate}
\label{thm:degree3characterization}
\end{thm}

\begin{proof} \hfill
\begin{enumerate}
    \item[$(\Rightarrow)$]
Assume that $g$ satisfies the periodicity condition for $d=2$. Then for all $x_1,\ldots,x_n \in \mathbb{Z}$,
\begin{align}
    & \i^{g\left(x_{1}, \ldots, x_{n}\right)} 
    = 
    \i^{g\left(x_{1} \bmod 2, \ldots,  x_{n} \bmod 2\right)} \\
    \iff \quad &
    g(x_{1}, \ldots, x_{n}) \equiv g(x_{1} \bmod 2, \ldots, x_{n} \bmod 2) \ \bmod 4 .
\label{eq:periodicityConditiong2}
\end{align}

Denote 
\begin{align}
\tilde{g}(x, y, z) &= 
g(x,y,z,0,\ldots,0) \\
&=
a_{111} x^{3}+a_{222} y^{3}+a_{333} z^{3} \nonumber\\
&\quad +a_{112} x^{2} y+a_{113} x^{2} z+a_{122} x y^{2} \nonumber\\
&\quad +a_{223} y^{2} z+a_{133} x z^{2}+a_{233} y z^{2} +a_{123} x y z \nonumber\\
&\quad+a_{11} x^{2}+a_{22} y^{2}+a_{33} z^{2} \nonumber\\
&\quad+a_{12} x y+a_{13} x z+a_{23} y z \nonumber\\
&\quad+a_{1} x+a_{2} y+a_{3} z+a .
\end{align}

Then, Eq.~\eqref{eq:periodicityConditiong2} implies that for all $x,y,z\in\mathbb Z$,
\begin{align}
\label{eq:gtildemod2condition}
    \tilde g(x,y,z) = \tilde g(x \bmod 2, y \bmod 2, z \bmod 2) \bmod 4 .
\end{align}

We will now use Eq.~\eqref{eq:gtildemod2condition}
repeatedly to find necessary conditions that the coefficients of the polynomial $g$ must satisfy.

First, Eq.~\eqref{eq:gtildemod2condition} implies that
\begin{align}
\tilde{g}(0,0,0) &=\tilde{g}(2,0,0) \bmod 4 \\
\implies\quad
0 &=a_{111} 2^{3}+a_{11} 2^{2}+a_{1} 2 \bmod 4 \nonumber\\
&= 2 a_{1} \bmod 4 ,
\end{align}
which implies that $a_1$ is even. By symmetry between 1 and $i$ for $i\in \{1,\ldots,n\}$,
\begin{align}
\label{eq:aieven}
    \boxed{
    a_i \mbox{ is even } \quad \forall i \in \{1,\ldots,n\}.
    }
\end{align}

Second, Eq.~\eqref{eq:gtildemod2condition} implies that
\begin{align}
&\quad \tilde{g}(1,0,0) =\tilde{g}(-1,0,0) \bmod 4 \\
&\implies\quad
a_{111}+a_{11}+a_1 =-a_{111} +a_{11} -a_{1} \bmod 4 \\
&\implies\quad
2a_{111}+2a_1 = 0 \bmod 4 .
\end{align}
    
By Eq.~\eqref{eq:aieven}, $a_1$ is even, and so $2a_1=0 \bmod 4$. Hence,
\begin{align}
    2a_{111} = 0 \bmod 4 ,
\end{align}
which implies that $a_{111}$ is even. By symmetry between 1 and $i$ for $i\in \{1,\ldots,n\}$,
\begin{align}
\label{eq:aiiieven}
    \boxed{
    a_{iii} \mbox{ is even } \quad \forall i \in \{1,\ldots,n\}.
    }
\end{align} 
  
Third, Eq.~\eqref{eq:gtildemod2condition} implies that
\begin{align*}
& \hspace{2.5em} \tilde{g}(0,1,0) 
=\tilde{g}(2,1,0) \bmod 4 \\
&\implies
 a_{222}+a_{22}+a_2 = a_{111} 8 + a_{222} + a_{112} 4 + a_{122} 2 \\
& \hspace{9em}+ a_{11} 4 + a_{22} + a_{12} 2 + a_1 2 + a_2
\bmod 4 \nonumber\\
& \implies
2a_1 + 2a_{12} + 2a_{122} = 0 \bmod 4 .
\end{align*}
By Eq.~\eqref{eq:aieven}, $a_1$ is even, and so, $2a_1=0 \bmod 4$. Hence,
\begin{align}
    2(a_{12}+a_{122}) = 0 \bmod 4,
\end{align}
which implies that $a_{12}+a_{122}$ is even, i.e.~$a_{12}$ and $a_{122}$ have the same parity.

By symmetry,
\begin{align}
\label{eq:aiiiparity}
    \boxed{
    a_{ij}, a_{ijj}, a_{iij} \mbox{ have the same parity } \quad \forall i<j \in \{1,\ldots,n\}.
    }
\end{align}

Fourth, Eq.~\eqref{eq:gtildemod2condition} implies that
\begin{align*}
& \hspace{2.5em} \tilde{g}(0,1,1) 
=\tilde{g}(2,1,1) \bmod 4 \\
&\implies
 a_{222}+a_{333}+a_{223}
 +a_{233}+a_{22}+a_{33}
 +a_{23}+a_2+a_3 
\\
& \hspace{4.5em} = a_{111} 8 + a_{222} + a_{333} +a_{112} 4 + a_{113} 4 + a_{122} 2 + a_{223} \\ 
&\hspace{6em} + a_{133} 2
+ a_{233} + a_{123} 2 + a_{11}4 +a_{22}+a_{33} + a_{12} 2 \\
&\hspace{6em} + a_{13} 2 + a_{23} + a_1 2 + a_2 + a_3 \bmod 4 \\
&\implies 
2(a_{122}+a_{133}+a_{123}+a_{12}+a_{13}+a_1) = 0 \bmod 4.
\end{align*}

By Eq.~\eqref{eq:aiiieven},
$a_{122}+a_{12}$ and $a_{133}+a_{13}$ are both even, and hence
$2(a_{122}+a_{12}) = 0 \bmod 4$ and 
$2(a_{133}+a_{13}) = 0 \bmod 4$. Also, 
Eq.~\eqref{eq:aieven} implies that $a_1$ is even, and so, $2a_1=0 \bmod 4$. Therefore,
\begin{align}
    2a_{123} = 0 \bmod 4,
\end{align}
which implies that $a_{123}$ is even. By symmetry,
\begin{align}
\label{eq:aijkeven}
    \boxed{
    a_{ijk} \mbox{ is even } \quad \forall i<j<k \in \{1,\ldots,n\}.
    }
\end{align} 

Together, Eqs.~\eqref{eq:aieven}, 
\eqref{eq:aiiieven},  \eqref{eq:aiiiparity} and \eqref{eq:aijkeven} imply
the consequent of the logical biconditional in Theorem 
\ref{thm:degree3characterization}.

\item[$(\Leftarrow)$] 
Assume that (i) and (ii) in Theorem \ref{thm:degree3characterization} hold. Then,
\begin{align}
&\qquad    g(\modtwo(x_1),\ldots,\modtwo(x_n)) \nonumber\\ &= 
\sum_{1 \leqslant i_1 \leqslant i_2 \leqslant i_3 \leqslant n} a_{i_i i_2 i_3} \modtwo(x_{i_1}) \modtwo(x_{i_{2}}) \modtwo(x_{i_{3}}) \nonumber\\ &\quad +
\sum_{1 \leqslant i_{1} \leqslant i_2 \leqslant n} a_{i_1 i_{2}} \modtwo(x_{i_1}) \modtwo(x_{i_{2}})+\sum_{i=1}^{n} a_{i} \modtwo(x_{i}) + a
\nonumber\\
&=
\sum_{1 \leqslant i_1 \leqslant i_2 \leqslant i_3 \leqslant n} a_{i_i i_2 i_3} \modtwo(x_{i_1} x_{i_{2}} x_{i_{3}}) \nonumber\\ &\quad +
\sum_{1 \leqslant i_{1} \leqslant i_2 \leqslant n} a_{i_1 i_{2}} \modtwo(x_{i_1} x_{i_{2}})+\sum_{i=1}^{n} a_{i} \modtwo(x_{i}) + a \nonumber\\
&= \sum_i \underbrace{a_{iii} \modtwo(x_i^3)}_{\Circled{1}} + \sum_{i<j} 
\underbrace{\left[  a_{iij} \modtwo(x_i^2 x_j) +
a_{ijj} \modtwo(x_i x_j^2) +
a_{ij} \modtwo(x_i x_j)\right]
}_{\Circled{2}}
 \nonumber\\
&\quad +
\sum_{i<j<k} \underbrace{a_{ijk} \modtwo(x_i x_j x_k)}_{\Circled{3}}
+ \sum_i \underbrace{a_{ii} \modtwo(x_i^2)}_{\Circled{4}}
+ \sum_i \underbrace{
a_i \modtwo(x_i)}_{\Circled{5}} + a
\label{eq:gmod2xi}
\end{align}

To evaluate \Circled{1}, \Circled{3} and \Circled{5} $\mod 4$, we use the identity
\begin{align}
    2a \ \modtwo(x) \equiv 2ax \pmod 4 \quad \forall a,x \in \mathbb Z
\end{align}

Since $a_{iii}$, $a_{ijk}$ and $a_i$ are even,
it follows that 
\begin{align}
    \Circled{1} &= a_{iii} x_i^3 \bmod 4 
    \label{eq:circled1} \\
    \Circled{3} &= a_{ijk} x_i x_j x_k \bmod 4
    \label{eq:circled3}
    \\
    \Circled{5} &= a_{i} x_i \bmod 4 \label{eq:circled5}
\end{align}

To evaluate \Circled{3} $\mod 4$, we use the identity
\begin{align}
    a \ \modtwo(x^2) \equiv ax^2 \pmod 4 \quad \forall a,x \in \mathbb Z
\end{align}
which gives
\begin{align}
\label{eq:circled4}
    \Circled{4} = a_{ii} x_i^2 \bmod 4
\end{align}

To evaluate \Circled{2} $\mod 4$, we use Lemma \ref{lem:identitymod}, which implies that
\begin{align}
\label{eq:circled2}
    \Circled{2} = a_{iij} x_i^2 x_j +
a_{ijj} x_i x_j^2 +
a_{ij} x_i x_j \ \mod 4
\end{align}

Substituting Eqs.~\eqref{eq:circled1}, \eqref{eq:circled2}, \eqref{eq:circled3}, \eqref{eq:circled4} and \eqref{eq:circled5}
into Eq.~\eqref{eq:gmod2xi} gives
\begin{align}
    g(\modtwo(x_1),\ldots,\modtwo(x_n)) = g(x_1,\ldots,x_n)
\end{align}
which means that $g$ satisfies the periodicity condition.

\end{enumerate}

\end{proof}

\end{appendix}

\bibliographystyle{ieeetr}
\bibliography{bib}{}

\end{document}